
\documentclass[11pt]{article}


\setlength{\textwidth}{6.8in} 
\setlength{\evensidemargin}{-.2in} 
\setlength{\oddsidemargin}{-.2in}

\usepackage{graphicx} 
\usepackage{amsfonts}
\usepackage{amssymb}
\usepackage{amsmath}
\usepackage{bbm}
\usepackage{enumitem}

\usepackage[usenames,svgnames]{xcolor}
\usepackage{url}
\usepackage{umoline}

\usepackage{hyperref}
\hypersetup{
     colorlinks=true,       		
     linkcolor=Salmon,          	
     citecolor=blue,            
     filecolor=blue,      		
     urlcolor=cyan,           	
 }

\newtheorem{theorem}{Theorem}[section]

\newtheorem{deff}[theorem]{Definition}

\newtheorem{lem}[theorem]{Lemma}

\newtheorem{corol}[theorem]{Corollary}


\newcommand{\ignore}[1]{{}}


\newcommand{\Eq}[1]{Eq.~(\ref{#1})}
\newcommand{\Fig}[1]{Fig.~\ref{#1}}
\newcommand{\Def}[1]{Def.~\ref{#1}}
\newcommand{\Lem}[1]{Lemma~\ref{#1}}

\newcommand{\Sec}[1]{Sec.~\ref{#1}}
\newcommand{\Ref}[1]{Ref.~\cite{#1}}
\newcommand{\Thm}[1]{Theorem~\ref{#1}}

\newcommand{\Id}{{\mathbbm{1}}}

\newcommand{\norm}[1]{{\| #1 \|}}  
  
\newcommand{\ket}[1]{{ |{#1} \rangle }}  
\newcommand{\bra}[1]{{ \langle {#1} | }}
\newcommand{\braket}[2]{{ \langle {#1} | {#2} \rangle}}

\newcommand{\Oorderof}{\mathcal{O}}
\newcommand{\orderof}[1]{\Oorderof(#1)} 
 
\newcommand{\EqDef}{\stackrel{\mathrm{def}}{=}}

\newcommand{\eps}{\epsilon}  

 \newcommand{\qedsymb}{\hfill{\rule{2mm}{2mm}}}  
 \newenvironment{proof}[1][]{\begin{trivlist}  
 \item[\hspace{\labelsep}{\bf\noindent Proof#1:\/}]}
 {\qedsymb\end{trivlist}}


\newcommand{\Egs}{\epsilon_0}


\newcommand{\Lin}{L}
\newcommand{\Lout}{L^c}
\newcommand{\Lb}{\partial L}
\newcommand{\Lc}{\overline{L}}
\newcommand{\Hin}{H_L}
\newcommand{\Hout}{H_{L^c}}
\newcommand{\Hb}{H_\partial}
\newcommand{\tH}{\tilde{H}}
\newcommand{\tP}{\tilde{P}}
\newcommand{\tHin}{\tilde{H}_L}
\newcommand{\Ein}{\eps_0(\Lin)}
\newcommand{\Eout}{\eps_0(\Lout)}

\newcommand{\teps}{\tilde{\eps}}
\newcommand{\tEgs}{\teps_0}
\newcommand{\tPi}{\tilde{\Pi}}

\begin{document}

\title{Connecting global and local energy
  distributions in quantum spin models on a lattice}

\author{Itai Arad\footnote{Centre for Quantum Technologies, 
  National University of Singapore, Singapore.}\ \and 
  Tomotaka Kuwahara\footnote{Department of Physics, The University
  of Tokyo, Japan}\ \and
  Zeph Landau\footnote{UC Berkeley, California, USA.}}
\date{\today}

\maketitle

\begin{abstract}
  Local interactions in many-body quantum systems are generally
  non-commuting and consequently the Hamiltonian of a local region
  cannot be measured simultaneously with the global Hamiltonian.
  The connection between the probability distributions of
  measurement outcomes of the local and global Hamiltonians will
  depend on the angles between the diagonalizing bases of these two
  Hamiltonians. In this paper we characterize the relation between
  these two distributions. On one hand, we upperbound the
  probability of measuring an energy $\tau$ in a local region, if
  the global system is in a superposition of eigenstates with
  energies $\eps<\tau$. On the other hand, we bound the probability
  of measuring a global energy $\eps$ in a bipartite system that is
  in a tensor product of eigenstates of its two subsystems. Very
  roughly, we show that due to the local nature of the governing
  interactions, these distributions are identical to what one
  encounters in the commuting case, up to some exponentially small
  corrections. Finally, we use these bounds to study the spectrum of
  a locally truncated Hamiltonian, in which the energies of a
  contiguous region have been truncated above some threshold energy
  $\tau$. We show that the lower part of the spectrum of this
  Hamiltonian is exponentially close to that of the original
  Hamiltonian. A restricted version of this result in 1D was a
  central building block in a recent improvement of the 1D area-law.
\end{abstract}

\section{Introduction}
\label{sec:introduction}

One of the most striking features of strongly interacting many-body
quantum systems is that despite their overwhelming complexity, they
often exhibit strong properties of locality, which make them
accessible analytically. A prominent example of such property is a
bound on the speed at which the effect of a local perturbation
spreads in lattice spin models, the well-known \emph{Lieb-Robinson
bound}~\cite{ref:LR-bound72, ref:LSM-Hastings04, 
ref:Nachtergaele2006-LR, ref:Cheneau2012-Light, 
ref:Richerme2014-NonLocal}. This bound is the backbone of numerous
fundamental results in quantum many-body systems: the
Lieb-Schultz-Mattis theorem~\cite{ref:LSM-Hastings04,
ref:Nachtergaele2007-LSM} the exponential decay of
correlations~\cite{ref:Hastings2004-Markov, ref:Hastings2006-ExpDec,
ref:Nachtergaele2006-LR}, the 1D
area-law~\cite{ref:Hastings2007-1D}, the complexity of quantum
simulations~\cite{ref:Osborne2006-Efficient,
ref:Osborne2007-Adiabatic, ref:Hastings2009-Adiabatic}, the
stability of topological order to
perturbation~\cite{ref:Hastings2005-TO-Stability,
ref:Bravyi2006-Topo-Order, ref:Bravyi2010-TO-Stability,
ref:Bravyi2011-TO-Stability}, the quantization of the hall
conductance~\cite{ref:Hastings2014-Hall}, and so on. 

In the present paper we expose a new kind of locality property in 
many-body quantum spin systems, which is related to their
Hamiltonian eigenstates. The main question we ask is how the
eigenstates of the global system's Hamiltonian are related to those
of a subsystem. Specifically, consider a many-body quantum spin
system on a lattice $\Lambda$ that is described by a local
Hamiltonian $H=\sum_{X\subset \Lambda} h_X$. We assume that every
interaction involves at most $k$ particles, and the the total
strength of all interactions that involve a particle is finite. This
includes almost all interesting spin models with short-range
interactions such as the the XY model~\cite{ref:XY}, the Heisenberg
model~\cite{ref:Haldane1983-HeisenbergI,ref:Haldane1983-HeisenbergII}
and the AKLT model~\cite{ref:AKLT}, as well as some models with
long-range interactions such as the Lipkin-Meshcov-Glick
model~\cite{ref:LMG}. We consider a region $L\subset
\Lambda$ in the lattice with $L^c=\Lambda\setminus L$ its
complement, and decompose $H$ into 3 parts, $H=\Hin+\Hout+\Hb$:
$\Hin$ acting on the particles in $L$, $\Hout$ on those in $L^c$ and
$\Hb$ contains interactions involving both particles in $L$ and
$L^c$. We then ask two basic questions (Figs.~\ref{fig:settings} and \ref{fig:dists-illustration}):
\begin{enumerate}
  \item Given an eigenstate $\ket{\psi_\eps}$ of $H$,
    how does its expansion in terms of eigenstates of $\Hin$ look
    like? More generally, if $I_1\EqDef [\eps_a, \eps_b]$,
    $I_2\EqDef[\eps_c, \eps_d]$ are ranges of energies and
    $\Pi_{I_1}$ is the projector into the subspace
    of eigenstates of $H$ with energies in $I_1$ and similarly
    $P_{I_2}$ is defined with respect to $\Hin$, then what can we say
    about the overlap $\norm{P_{I_2}\Pi_{I_1}}$, where
    $\norm{\cdots}$ denotes the operator norm?
    
  \item Similarly, if $\ket{\psi_\eps}$ is an eigenstate
    of $\Hin + \Hout = H-\Hb$, how does its expansion in terms of
    eigenstates of $H$ look like? Or, more generally, if
    $I_1\EqDef [\eps_a, \eps_b]$,
    $I_2\EqDef[\eps_c, \eps_d]$ are ranges of energies, 
    $\Pi_{I_1}$ is defined as above and $Q_{I_2}$ is defined with
    respect to $\Hin+\Hout$,  then what can we say
    about the overlap $\norm{\Pi_{I_1}Q_{I_2}}$?
\end{enumerate}
As we shall see, to answer these questions we shall need to answer
the following more basic question:
\setlist[enumerate,1]{start=0}
\begin{enumerate}
  \item Given an eigenstate $\ket{\psi_\eps}$ of $H$, and an 
    operator $A$ that acts on region of the lattice $L\subset
    \Lambda$,  how does the expansion of $A\ket{\psi_\eps}$ in
    terms of eigenstates of $H$ looks like? In other words, what can
    we say about the transition probability
    $|\bra{\psi_{\eps'}}A\ket{\psi_\eps}|^2$ for another
    eigenstate $\ket{\psi_{\eps'}}$?
\end{enumerate}
\setlist[enumerate,1]{start=1}

\begin{figure}
  \begin{center}
    \includegraphics[scale=1.0]{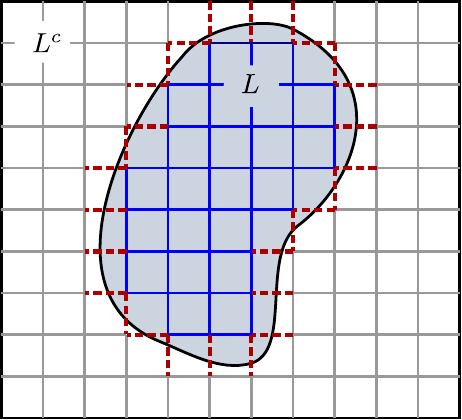}
  \end{center}
  \caption{\label{fig:settings} An illustration of a quantum spin
    model on a lattice (see Section~\ref{sec:settings} in detail).
    Edges denote interactions between the particles that sit on the
    vertices. A contiguous region $\Lin$ in the lattice decomposes
    the particles into two sets, those inside $\Lin$ and outside of
    it. This, in turn, defines a decomposition of $H$ into 3 parts:
    $\Hin$, which is made of all the interactions among particles
    inside $\Lin$ (blue edges), $\Hout$, which is made of all the
    interactions among particles inside $\Lout$ (gray edges), and
    $\Hb$ which is made of all the remaining interactions denoted by
    red dotted edges. } 
\end{figure}

\begin{figure}
  \begin{center}
    \includegraphics[scale=1.5]{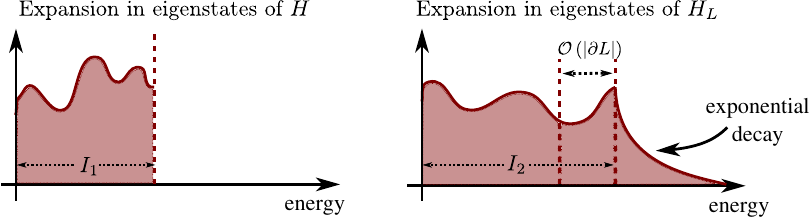}
  \end{center}
  \caption{\label{fig:dists-illustration} The connection between the
  weight distribution of eigenstates of $H$ of a given state to the
  distribution of the \emph{same} state but with respect to the
  eigenstates of $\Hin$. If the eigenstates of $H$ are supported in
  a region $I_1$, then up to an exponentially small tail, the
  eigenstates of $\Hin$ will be supported on a segment $I_2$, which
  is only larger than $I_1$ by a constant of $\orderof{|\Lb|}$. See
  \Thm{thm:dist} for a precise statement.}
\end{figure}

%
%
Our results require some care to be stated precisely, but the
summary is that \emph{due to the locality of the underlying
Hamiltonians}, up to some exponentially small corrections, the
system behaves largely as if the underlying Hamiltonians are
commuting; states that are well localized with respect to one
Hamiltonian are well localized with respect to the other. This fits
into a family many results in many-body quantum systems where the
general case resembles the behavior of models with commuting
interactions. Indeed, central to the proof is the assumption that
every local interaction term in $H$ commutes with all but a constant
number of terms.\footnote{The precise requirement, given in
\Sec{sec:settings}, is actually less stringent. We only require that
for any local interaction term, the sum of the norms of all other
terms that do not commute with it is bounded by an $\orderof{1}$
constant.} Other examples of this phenomenon include (but are not
restricted to) many of the Lieb-Robinson bound results mentioned
above such as the exponential decay of correlations in gapped
ground states, and the \emph{area-law} behavior often observed in
gapped systems \cite{ref:AL-rev} and rigorously proved for 1D gapped
systems. We note that all these results trivially hold in the
commuting case.

%
%

Apart from being very natural, the motivation behind the above
questions is three-fold. The initial motivation was a recent proof
of the 1D area-law in which it was needed to construct a Hamiltonian
whose local spectrum is very close to the original 1D Hamiltonian,
but its norm over some large parts of the system is truncated (see
\Def{def:Ht} for a precise statement). As we will see in the proof of
\Thm{thm:lowspec}, by answering questions (0-2), we are able to
construct such Hamiltonians in any lattice dimension. It is
reasonable to believe that using our techniques other interesting
constructions can be done. 

The second place where our results may prove useful is in the
analysis of many-body quantum dynamics of closed
systems~\cite{ref:Polkovnikov2011-Quench}. There, the dynamics is
governed by the Schr\"odinger equation
$\ket{\psi(t)}=e^{-iHt}\ket{\psi(0)}$, and can be calculated from 
the decomposition of $\ket{\psi(0)}$ in terms of eigenstates of $H$.
Our results (specifically, question 0) then can be useful for states
like $\ket{\psi(0)} = A\ket{\eps}$, where $\ket{\eps}$ is an
eigenstate of the Hamiltonian and $A$ is some local operator. This
dynamic is particularly relevant for calculating the spectral
function of lattice models~\cite{ref:Spectral-I, ref:Spectral-II},
as well as for understanding quantum quenches
\cite{ref:Polkovnikov2011-Quench}.

Finally, our results, particularly questions (1-2), seem highly
relevant for the question of thermalization of closed quantum
systems (see \Ref{ref:Polkovnikov2011-Quench,
ref:Eisert2015-Thermal} and references within) and related subjects
as the Eigenstate Thermalization Hypothesis
(ETH)~\cite{ref:Deutsch1991-QStat, ref:Srednicki1994-ETH,
ref:Tasaki1998-ETH, ref:Rigol2008-ETH, ref:Rigol2012-altETH}, the first law of thermodynamics~\cite{PhysRevE.92.032115} and relaxation process in periodically driven systems~\cite{ref:floquet}.

{~}

\noindent\textbf{Relation to previous work:}\\ Surprisingly, despite
the natural character of our main questions, we are not aware of
previous works that aim directly at them. Nevertheless, there are
some ``near by'' results. Perhaps the most relevant result is the
so-called ``local diagonality of energy eigenstates'', which is
proved by Muller et~al in \Ref{ref:Mueller2013-Thermal}. There, the
authors use the Lieb-Robinson bound to prove a slightly weaker
version of one of the necessary conditions for ETH: that the reduced
density matrix of a global energy eigenstate $\ket{\eps}$ over some
region $L$ is (almost) diagonal in the local energy eigenbasis.

{~}

\noindent\textbf{Organization of the paper:}\\ In \Sec{sec:results}
we provide a self-contained statement of our main results, together
with a description of spin systems to which they apply. In
Sections~\ref{sec:expE}--\ref{sec:Ht} we provide the full
proofs of these theorems. In \Sec{sec:summary} we conclude with a
summary and some open questions and future directions.

\section{Statement of results}
\label{sec:results}

\subsection{Notation and general setup}
\label{sec:settings}

We consider a quantum system of $N$ quantum particles (spins) of
local dimension $d$ that are located on the vertices of some $D$
dimensional lattice $\Lambda$. We think of $N$ as a large number,
but we are not assuming the thermodynamic limit. The interaction
between the particles is governed by a generic few-body Hamiltonian $H$:
\begin{align}
  H &= \sum_{X\subset \Lambda} h_X \,, 
\end{align}
where $X\subset\Lambda$ are subsets of particles, and we assume that
\emph{$h_X=0$ for $|X|>k$}, that is, all interactions involve at
most $k$ particles. 
By shifting and rescaling the local terms, we can always
pass to dimensionless units in which
\begin{align}
\label{eq:hX}
h_X  \ge 0
\end{align}
i.e., $h_X$ are semi-positive definite.  We \emph{do not} assume
that $h_X$ involves only neighboring particles on the lattice.
Instead, we impose the weaker condition that there exists some
constant $g=\orderof{1}$ such for every particle $i$, 
\begin{align}
\label{def:g}
  \sum_{X: i\in X} \norm{h_X}\le g \,.
\end{align}
That is, the sum of norms of all interactions that involve particle
$i$ is bounded by $g$.  All nearest-neighbors systems on a
$D$-dimensional square lattice satisfy this property with, say,
$g\le (2D)^{k-1}$. In addition, it also satisfied by some some
models with long-range interactions as the Lipkin-Meshcov-Glick
model~\cite{ref:LMG} (i.e., the infinite range XY model). Finally, a
constant that we shall often use is
\begin{align}
\label{def:lambda}
  \lambda &\EqDef\frac{1}{2gk} \,.
\end{align}
    
Although we do not treat fermionic systems explicitly, our discussions can be also applied to various class of local
fermionic systems because they can be mapped into local spin systems \cite{PhysRevLett.95.176407,1742-5468-2005-09-P09012}. 
On the other hand, for bosonic systems, we cannot generally assume the inequality~\eqref{def:g} because an arbitrary number of boson can be in the same site and the one-site energy is not upperbounded. 
Therefore, in order to apply our discussion to bosonic systems, 
we need additional assumptions such as the hard-core boson.

We denote the energy levels of the system (the eigenvalues of $H$)
by $0\le \eps_0\le \eps_1\le\eps_2 \ldots$, and their corresponding
eigenvectors by $\ket{\psi_0}, \ket{\psi_1}, \ket{\psi_2},\ldots$.
Notice that $\eps_0\ge 0$ since we assume that every $h_X$ is
non-negative.

Throughout, we let $\Lin$ denote a subset of the particles, and
$\Lout$ the complementary subset.  We usually envision the case
where the particles of $L$ are sitting in a contiguous region of the
system, but this is not a requirement. Given a subset $L$, we can
partition the $h_X$ terms in $H$ into three subsets $E_{\Lin},
E_{\Lout}, E_{\Lb}$ depending on whether their non-trivial action is
within $L$, within $L^c$, or involving both particles in $\Lin$ and
$\Lout$. For each subset we then define the corresponding
Hamiltonian
\begin{align}
  \label{def:Hamiltonians}
  \Hin &\EqDef \sum_{X\in E_{\Lin}}h_X \ , & 
  \Hb &\EqDef \sum_{X\in E_{\Lb}}h_X \ ,   &
  \Hout &\EqDef \sum_{X\in E_{\Lout}}h_X \,,
\end{align}
so that
\begin{align}
  H = \Hin + \Hb + \Hout \,.
\end{align}
This decomposition is illustrated in \Fig{fig:settings}.  We denote
the energy levels of $\Hin$ by $\eps_0(\Lin)\le\eps_1(\Lin)\le
\ldots$, and the energy levels of $\Hout$ by
$\eps_0(\Lout)\le\eps_1(\Lout)\le\ldots$.

By a slight abuse of notation, we define
\begin{align}
\label{def:sizes}
  |\Lin| &\EqDef \sum_{X\in E_{\Lin}} \norm{h_X} \ , &
  |\Lb| &\EqDef \sum_{X\in E_{\Lb}} \norm{h_X} \ , &
  |\Lout| &\EqDef \sum_{X\in E_{\Lout}} \norm{h_X} \,.
\end{align}
Notice that when each $h_X$ has exactly norm 1 and is defined on
exactly $k$ particles, and in addition every particle participates
in exactly $g$ interactions then the number of particles in $\Lin$
is indeed $\orderof{\frac{k}{g}|L|}$. Finally, we define
$|\Lc|\EqDef |\Lin| + |\Lb|$.

\subsection{Main results}

We begin with \Thm{thm:expE}, the backbone of \Thm{thm:dist} and
\Thm{thm:product}. It bounds the effect of an arbitrary operator $A$
on a superposition of eigenstates of $H$. Specifically, we assume
that we are given a state that is a superposition of eigenstates of
$H$ with energies in $[0,\eps]$ and then some operator $A$ (say, a
unitary transformation) is applied. The resultant state, of course,
can contain eigenstates of $H$ outside $[0,\eps]$, with energies
greater than $\eps$. Classically, if the norm of every local term is
at most one, and we apply a transformation on a region $L$, the total
energy can change by at most the number of local Hamiltonian
terms it touches, i.e., by $|\Lc|$.  In the quantum case a similar
thing holds, up to some exponentially small corrections: the energy
distribution is concentrated on the interval $[0,\eps+|\Lc|]$, and
the weight of eigenstates with energy above $\eps+|\Lc|$ is
exponentially suppressed. When $A$ commutes with $\Hb$, the
concentration is on the tighter interval $[0,\eps+|\Lb|]$. The proof
of this theorem is based on an unpublished result by M.~B.~Hastings
which proved that for any operator $A$ supported on $L$,
$\norm{\Pi_{[\eps',\infty)}\, A \, \Pi_{[0,\eps]}} \le
e^{-\orderof{\lambda(\eps'-\eps)/|L|}}\cdot \norm{A}$.

\begin{theorem}
\label{thm:expE} Let $\Pi_{[\eps',\infty)}$ and $\Pi_{[0,\eps]}$ be
  projectors onto the subspaces of energies of $H$ that are $\ge
  \eps'$ and $\le \eps$ respectively. For an operator $A$, let $E_A$
  be a subset of interaction terms such that $[H,A]=\sum_{X\in E_A}
  [h_X,A]$, and let $R\EqDef\sum_{X\in E_A} \norm{h_X}$.  Then
  \begin{align}
  \label{eq:general-expE}
    \norm{\Pi_{[\eps',\infty)}\, A \, \Pi_{[0,\eps]}}
      \le  \norm{A}\cdot \exp\Big\{ -\frac{1}{gk}\big[
         \eps'-\eps-R(1+\ln\frac{\eps-\eps'}{R})\big]\Big\} 
       \le \norm{A}\cdot e^{-\lambda(\eps'-\eps-2R)} \,,
  \end{align}
  where $\lambda \EqDef \frac{1}{2gk}$ was defined in
  \Eq{def:lambda}.\\ \noindent\textbf{Note:} When $A$ is supported
  on a subset of $\Lin$ particles, we can set $R=|\Lc| = |\Lin| +
  |\Lb|$. If in addition $[A,\Hin]=0$, we may set $R=|\Lb|$.
\end{theorem}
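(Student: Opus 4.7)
The plan is to combine an operator–Markov moment bound with a locality-based estimate on the nested commutators $\mathrm{ad}_H^j(A):=[H,[H,\ldots,[H,A]\ldots]]$ ($j$ brackets). Since $H\ge 0$, the operator inequality $\Pi_{[\eps',\infty)}\le(H/\eps')^n$ holds for every $n\ge 1$. Applying it to vectors of the form $A\ket{\phi}$ with $\ket{\phi}$ in the range of $\Pi_{[0,\eps]}$ yields
\begin{equation*}
  \|\Pi_{[\eps',\infty)}\, A\, \Pi_{[0,\eps]}\|^2 \le (\eps')^{-n}\,\|\Pi_{[0,\eps]}\, A^\dagger H^n A\, \Pi_{[0,\eps]}\|.
\end{equation*}
Combined with the commutator identity $H^n A = \sum_{j=0}^n\binom{n}{j}\mathrm{ad}_H^j(A)\,H^{n-j}$ and $\|H^{n-j}\Pi_{[0,\eps]}\|\le\eps^{n-j}$, the right-hand side is at most
\begin{equation*}
  \frac{\|A\|}{(\eps')^n}\sum_{j=0}^n\binom{n}{j}\|\mathrm{ad}_H^j(A)\|\,\eps^{n-j}.
\end{equation*}
An essentially equivalent exponential variant uses $\|e^{\beta H}\Pi_{[0,\eps]}\|\le e^{\beta\eps}$ and $\|\Pi_{[\eps',\infty)}e^{-\beta H}\|\le e^{-\beta\eps'}$, reducing the problem to bounding $\|e^{\beta H}Ae^{-\beta H}\|\le\sum_{j\ge 0}\tfrac{\beta^j}{j!}\|\mathrm{ad}_H^j(A)\|$ via the Taylor series in $\mathrm{ad}_H$.

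The heart of the proof is then a locality bound on $\|\mathrm{ad}_H^j(A)\|$. Expanding it as a sum over sequences $(X_1,\ldots,X_j)$ of interaction terms, a repeated application of the Jacobi identity shows that $[h_{X_j},\ldots,[h_{X_1},A]\ldots]$ vanishes unless $X_1\in E_A$ and each subsequent $X_i$ fails to commute with either $A$ (so $X_i\in E_A$) or with some earlier $h_{X_k}$ (so $X_i\cap X_k\ne\emptyset$). The hypotheses $\sum_{X\ni i}\|h_X\|\le g$ and $|X|\le k$ imply that the norm-weighted commutation degree of each $h_X$ is at most $gk$, so a union bound at step $i$ shows that the admissible $X_i$ contribute at most $R+(i-1)\,gk$ in norm-weighted count. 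A careful iterated operator bound (organized so as to avoid a spurious $2^j$ overhead) then yields an estimate of the form
\begin{equation*}
  \|\mathrm{ad}_H^j(A)\|\le \|A\|\,\prod_{i=0}^{j-1}\bigl(R+i\cdot gk\bigr).
\end{equation*}
The isolated factor $R$ at $i=0$ reflects the restriction $X_1\in E_A$ and is precisely what replaces Hastings' earlier $|L|$ by $R$ in the final exponent.

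Substituting back, the resulting sum evaluates in closed form via the generating-function identity $\sum_j\binom{R/(gk)+j-1}{j}z^j=(1-z)^{-R/(gk)}$, giving in the exponential variant $\|e^{\beta H}Ae^{-\beta H}\|\le\|A\|(1-gk\beta)^{-R/(gk)}$. Optimizing the free parameter at $\beta^*=(\eps'-\eps-R)/(gk(\eps'-\eps))$ (equivalently $n^*\sim(\eps'-\eps)/(gk)$) produces the sharp bound with the logarithmic correction $R(1+\log((\eps'-\eps)/R))$, while the weaker form $e^{-\lambda(\eps'-\eps-2R)}$ follows from the elementary inequality $R\log(u/R)\le u/2$ with $u=\eps'-\eps$. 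The Note in the theorem is then immediate: if $A$ is supported in $\Lin$, only interactions in $E_{\Lin}\cup E_{\Lb}$ can fail to commute with $A$, giving $R\le|\Lc|$; if additionally $[A,\Hin]=0$, only boundary interactions remain and $R\le|\Lb|$. The main obstacle is the combinatorial step itself: a naive triangle-inequality argument at every commutator gives an extra $2^j$ factor that recovers only a weaker form, while a uniform-$gk$ count (ignoring the restriction $X_1\in E_A$ at the first step) reproduces Hastings' previous estimate with $|L|$ in the denominator of the exponent; the careful isolation of the first-commutator factor $R$ is what extracts the improvement.
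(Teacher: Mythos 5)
Your approach matches the paper's proof essentially step for step: the exponential insertion $\norm{\Pi_{[\eps',\infty)} A \Pi_{[0,\eps]}} \le e^{-s(\eps'-\eps)}\norm{e^{sH}Ae^{-sH}}$, the Hadamard expansion in nested commutators $\mathrm{ad}_H^j(A)$, the combinatorial estimate $\norm{\mathrm{ad}_H^j(A)}\le\norm{A}\prod_{i=0}^{j-1}(R+i\,gk)$ obtained by tracking which $X_i$ can contribute, the closed-form resummation $\norm{e^{sH}Ae^{-sH}}\le\norm{A}(1-sgk)^{-R/gk}$, and the optimizing choice $s = \frac{1}{gk}\bigl(1 - R/(\eps'-\eps)\bigr)$ all coincide with the paper. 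The Note at the end is handled the same way. The moment-bound $\Pi_{[\eps',\infty)}\le(H/\eps')^n$ you mention as an alternative is indeed equivalent; the paper works only with the exponential variant, which you correctly pursue.

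The one place where your write-up is not self-contained is the point you flag yourself: avoiding the factor $2^j$. You correctly identify this as the main obstacle and assert it can be overcome by organizing the iterated bound carefully, but you never say how. The paper's mechanism uses the normalization $h_X\ge 0$ in an essential way: set $\tilde{h}_X := h_X - \frac{1}{2}\norm{h_X}\,\Id$, so that $\norm{\tilde{h}_X}\le\frac{1}{2}\norm{h_X}$ precisely because $h_X$ is nonnegative, and then
\begin{equation*}
\norm{[h_X,O]} = \norm{[\tilde{h}_X,O]} \le 2\norm{\tilde{h}_X}\cdot\norm{O}\le\norm{h_X}\cdot\norm{O}\,.
\end{equation*}
Without this shift (or some equivalent), each nested commutator costs a factor of $2$, the product becomes $\prod_i(2R + 2i\,gk) = 2^j\prod_i(R+i\,gk)$, the geometric radius shrinks, and you land on the bound with $\lambda$ replaced by $\lambda/2$ and the threshold $2R$ by $4R$ --- a strictly weaker (though qualitatively similar) result. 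Your proposal would be complete and identical to the paper's proof if you made this positivity-based step explicit.
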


The proof uses similar techniques to those that are used in the
proof of the Lieb-Robinson bound\cite{ref:LR-bound72,
ref:LSM-Hastings04, ref:Nachtergaele2006-LR}. In particular, we
exploit the local nature of $H$ using the Hadamard formula $e^{s
H}Ae^{-s H}= A + s[H,A] + \frac{s^2}{2!}[H,[H,A]] + \ldots$. The
fact that $H$ is a sum of local terms implies that the commutators
on the RHS contain a finite number of terms, and their norm can
therefore be bounded.

We use \Thm{thm:expE} in the proofs of our main results that relate
the shape of the energy distributions of $H$ to that of $\Hin$ and
$\Hin + \Hout$ for parts of the system.  \Thm{thm:dist} shows that
states that are superposition of eigenstates of $H$ with energies
in $[0,\eps]$ can be expanded in eigenstates of $\Hin$ with energies
in $[0,\eps-\eps_0 + \eps_0(L) + 3|\Lb|]$, plus some eigenstates
outside that range with exponentially small weights. The upperbound 
$\eps-\eps_0 + \eps_0(L) +3|\Lb|$ has the following intuitive
interpretation. It
consists of two parts: $\eps-\eps_0 + \eps_0(L)$ maps an energy
excitation in $H$ to the same excitation in $\Hin$ by shifting the
ground energies $\eps_0\to \eps_0(L)$. The second part, $3|\Lb|$,
corresponds to a widening of the range due to the boundary
interactions.
\begin{theorem}
\label{thm:dist} Let $P_{[\tau,\infty)}$ denote the projection onto
  the subspace of energies of $H_L$ which are $\ge \tau$, and let
  $\Pi_{[0,\eps]}$ denote the projection onto the subspace of
  energies $H$ that are $\le \eps$. Then
  \begin{align}
  \label{eq:EnergyDist}
    \norm{P_{[\tau,\infty)}\Pi_{[0,\eps]}} \le 
    \frac{2}{\lambda^{1/2}}
      \cdot e^{-\lambda(\Delta \tau - \Delta \eps - 3|\Lb|)} \,,
  \end{align}
  where $\Delta\tau \EqDef \tau-\eps_0(\Lin)$ and $\Delta \eps
  \EqDef \eps-\eps_0$, and $\eps_0(\Lin)$ and $\eps_0$ are the
  ground energies of $H_L$ and $H$ respectively. 
\end{theorem}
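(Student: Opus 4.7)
The plan is to apply Theorem~\ref{thm:expE} with the operator $A = P_{[\tau,\infty)}$. Since $P_{[\tau,\infty)}$ is a spectral function of $H_L$, it commutes with every $h_X$ whose support lies entirely in $\Lin$ or entirely in $\Lout$, and only interactions in $E_{\Lb}$ contribute to $[H,A]$. Thus $R = |\Lb|$ and $\|A\|=1$, giving
\begin{equation*}
  \|\Pi_{[\eps',\infty)}\, P_{[\tau,\infty)}\, \Pi_{[0,\eps]}\| \;\le\; e^{-\lambda(\eps'-\eps-2|\Lb|)}
\end{equation*}
for every $\eps' \ge \eps + 2|\Lb|$. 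This estimates the mass of $P_{[\tau,\infty)}\Pi_{[0,\eps]}\ket{v}$ that lies on $H$-eigenstates of energy above $\eps'$; the remaining task is to convert this ``upward-in-$H$'' control into the ``sideways'' bound on $\|P_{[\tau,\infty)}\Pi_{[0,\eps]}\|$ that the theorem requires.

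The next step is to split
\begin{equation*}
  P_{[\tau,\infty)}\Pi_{[0,\eps]}\ket{v} \;=\; \Pi_{[0,\eps']}P_{[\tau,\infty)}\Pi_{[0,\eps]}\ket{v} \;+\; \Pi_{(\eps',\infty)}P_{[\tau,\infty)}\Pi_{[0,\eps]}\ket{v}
\end{equation*}
orthogonally. The second summand is already handled. For the first, I use that $P_{[\tau,\infty)}$ is a projector to write $\bra{u}\Pi_{[0,\eps']}P_{[\tau,\infty)}\Pi_{[0,\eps]}\ket{v} = \langle P_{[\tau,\infty)}\Pi_{[0,\eps']}u,\,P_{[\tau,\infty)}\Pi_{[0,\eps]}v\rangle$, and Cauchy--Schwarz then yields
\begin{equation*}
  \|\Pi_{[0,\eps']}\,P_{[\tau,\infty)}\,\Pi_{[0,\eps]}\| \;\le\; \|P_{[\tau,\infty)}\Pi_{[0,\eps]}\|\cdot\|P_{[\tau,\infty)}\Pi_{[0,\eps']}\|.
\end{equation*}
Writing $g(\eps,\tau) = \|P_{[\tau,\infty)}\Pi_{[0,\eps]}\|$, both steps combine into the self-consistent recursion $g(\eps,\tau)^2\bigl(1-g(\eps',\tau)^2\bigr) \le e^{-2\lambda(\eps'-\eps-2|\Lb|)}$.

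To unwind this recursion into the claimed explicit bound I will iterate over a chain of thresholds $\eps = \eps_0 < \eps_1 < \cdots$ with spacing of order $1/\sqrt{\lambda}$, and reexpress the bound in the excitation coordinates $\Delta\tau = \tau - \eps_0(\Lin)$ and $\Delta\eps = \eps - \eps_0$ using the ground-state relation $\eps_0 \le \eps_0(\Lin) + \eps_0(\Lout) + |\Lb|$. The $3|\Lb|$ in the exponent breaks as $2|\Lb|$ from Theorem~\ref{thm:expE} plus $|\Lb|$ from this ground-state shift, and the $2/\sqrt{\lambda}$ prefactor arises from summing the resulting geometric series in the optimal bin width.

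The main obstacle I anticipate is closing the self-consistency loop: the only a priori bound is $g\le 1$, which makes the factor $(1-g(\eps',\tau)^2)$ degenerate and prevents the recursion from producing anything nontrivial on its own. The argument must therefore bootstrap from a regime in which $g(\eps',\tau)$ is \emph{strictly} below $1$; I expect to obtain this by combining the positivity $H - H_L = \Hb + H_{\Lout} \ge 0$ (so any state in the range of $P_{[\tau,\infty)}$ has $H$-expectation $\ge \tau$) with the upward bound to get a first strict inequality, and then to propagate it along the iteration.
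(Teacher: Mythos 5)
Your opening moves are sound and match the paper: applying Theorem~\ref{thm:expE} with $A=P_{[\tau,\infty)}$ so that $R=|\Lb|$, invoking the positivity $\Hb+\Hout\ge 0$, and using the ground-state relation $\eps_0\le\Ein+\Eout+|\Lb|$ to convert into the $\Delta$-coordinates are all the right ingredients, and they are exactly the ones the paper uses. But the Cauchy--Schwarz recursion $g(\eps,\tau)^2\bigl(1-g(\eps',\tau)^2\bigr)\le e^{-2\lambda(\eps'-\eps-2|\Lb|)}$ does not close, and the fix you sketch in your last paragraph does not actually feed back into it. As you observe, the recursion is useless without a \emph{quantitative} a priori bound $g(\eps',\tau)\le 1-c$ with explicit $c$; positivity alone gives only the non-quantitative statement that $g(\eps',\tau)<1$ whenever $\eps'<\tau+\Eout$ (and gives nothing once $\eps'$ is large, which is where an increasing chain $\eps_0<\eps_1<\cdots$ inevitably lands). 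You cannot therefore ``propagate it along the iteration'' from either end.

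The missing idea is to introduce the mean $H$-energy $\eps_\phi\EqDef\bra{\phi}H\ket{\phi}/\norm{\phi}^2$ of $\ket{\phi}=P_{[\tau,\infty)}\Pi_{[0,\eps]}\ket{\psi}$ and bound $\norm{\phi}$ \emph{directly} in terms of $\eps_\phi$, which is what the paper's \Lem{lem:normphi} does. The crucial technical point that your sketch elides is that $H$ is unbounded, so the upward bound of \Thm{thm:expE} on its own does not control $\bra{\phi}H\ket{\phi}$: a tiny weight at very high $H$-energy could still contribute arbitrarily much. The paper handles this by slicing $[\mu,\infty)$ into bins of width $h=\frac{\ln 2}{2\lambda}$ (note: width $\sim 1/\lambda$, not $1/\sqrt{\lambda}$), applying \Thm{thm:expE} bin by bin, and summing the convergent series $\sum_j(j+1)2^{-j}=4$; this yields $\bra{\phi}H\ket{\phi}\le\mu\norm{\phi}^2+4h\,e^{-2\lambda(\mu-\eps-2R)}$, and choosing $\mu=\eps_\phi-1$ and rearranging gives $\norm{\phi}\le\frac{2}{\lambda^{1/2}}e^{-\lambda(\eps_\phi-\eps-2R)}$ (this is where the $\lambda^{-1/2}$ prefactor comes from, as $\sqrt{4h}\sim\lambda^{-1/2}$). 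Positivity then gives $\eps_\phi\ge\tau+\Eout$ --- note you need the stronger lower bound $\tau+\Eout$, not merely $\tau$, to recover the stated $\Delta\tau-\Delta\eps-3|\Lb|$ exponent --- and the ground-state relation finishes. Once you have this, the Cauchy--Schwarz recursion is superfluous: the positivity-plus-slicing argument you were going to use as a ``bootstrap'' already proves the whole theorem on its own.
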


Our next result, \Thm{thm:product}, addresses the question of how
the shapes of the energy distributions for the whole system compares
to that of two isolated complementary parts. In other
words, how does the interaction between the two complementary parts
changes the energy distribution. Specifically, it shows that any
superposition of eigenstates of $\Hin+\Hout = H-\Hb$ with energies in
$[a,b]$ can be expanded as a superposition of eigenstate of $H$ in a
larger region $[a-3|\Lb|,b+3|\Lb|]$, plus some exponentially
small contributions from outside that region.
\begin{theorem}
  \label{thm:product} 
  Let $\Lin$ be a subset of particles and let $H=\Hin+\Hb+\Hout$ be
  its corresponding decomposition of $H$. Let $Q_I$ be the projector
  into the subspace of eigenstates of $\Hin+\Hout$ with energies in
  the range $I$, and let $\Pi_I$ be the corresponding projector of
  $H$. Then for any energy scales $\tau>\eps>0$,
  \begin{align}
  \label{eq:t-gt-eps}
    \norm{\Pi_{[0,\eps]}Q_{[\tau,\infty)}}
      \le \frac{2}{\lambda^{1/2}} e^{-\lambda(\tau-\eps - 3|\Lb|)} \,,
  \end{align}
  and for $\eps>\tau>0$, 
  \begin{align}
  \label{eq:eps-gt-t}
    \norm{\Pi_{[\eps, \infty)}Q_{[0,\tau]}}
      \le \frac{2}{\lambda^{1/2}} e^{-\lambda(\eps-\tau-3|\Lb|)} \,.
  \end{align}
\end{theorem}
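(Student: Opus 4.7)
The plan is to adapt the proof of Theorem~\ref{thm:dist} to the bipartite Hamiltonian $\tH \EqDef \Hin + \Hout$ in place of $\Hin$. The crucial observation is that $\tH$ is itself a local Hamiltonian obeying the hypotheses of Section~\ref{sec:settings}, and $H - \tH = \Hb$ has total weight $|\Lb|$. Consequently any operator $A$ that commutes with $\tH$---in particular any spectral projector $Q_I$---satisfies $[H,A] = [\Hb, A] = \sum_{X \in E_{\Lb}} [h_X, A]$, so Theorem~\ref{thm:expE} applies with $R = |\Lb|$. This is exactly the structural feature that made $R = |\Lb|$ available in the proof of Theorem~\ref{thm:dist}, where the analogous role was played by $\Hin$.

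To prove (\ref{eq:t-gt-eps}), I would run the proof of Theorem~\ref{thm:dist} line by line, replacing $\Hin$ by $\tH$ and the projectors $P_I$ by $Q_I$. The Markov-type slicing/optimization argument---which converts the raw sandwich estimate $\norm{\Pi_{[\eps',\infty)} Q_{[\tau,\infty)} \Pi_{[0,\eps]}} \le e^{-\lambda(\eps'-\eps-2|\Lb|)}$ supplied by Theorem~\ref{thm:expE} into a one-sided operator-norm bound---then yields
\begin{align*}
  \norm{\Pi_{[0,\eps]} Q_{[\tau,\infty)}}
    \le \frac{2}{\lambda^{1/2}}\,
        e^{-\lambda\bigl((\tau-\tEgs) - (\eps-\Egs) - 3|\Lb|\bigr)},
\end{align*}
where $\tEgs$ denotes the ground energy of $\tH$. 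Since $\Hb \ge 0$ gives $\tH \le H$, the min--max principle implies $\tEgs \le \Egs$, so $(\tau - \tEgs) - (\eps - \Egs) = (\tau - \eps) + (\Egs - \tEgs) \ge \tau - \eps$, which sharpens the displayed bound into (\ref{eq:t-gt-eps}).

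The proof of (\ref{eq:eps-gt-t}) is dual. Now I would feed $A = Q_{[0,\tau]}$---again commuting with $\tH$, hence admissible with $R = |\Lb|$---into Theorem~\ref{thm:expE}, obtaining the sandwich estimate $\norm{\Pi_{[\eps,\infty)} Q_{[0,\tau]} \Pi_{[0,\eps']}} \le e^{-\lambda(\eps - \eps' - 2|\Lb|)}$ for any $\eps' \in [0,\eps)$. The mirror slicing argument (summing now over $H$-energy slices below $\eps$) then produces (\ref{eq:eps-gt-t}); the $\tEgs \le \Egs$ reduction is unnecessary in this direction, since the exponent already emerges in the claimed form.

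The main obstacle will be the passage from the raw sandwich estimates delivered by Theorem~\ref{thm:expE} to the one-sided operator norms appearing in the statement. This is the step that forces the $\lambda^{-1/2}$ prefactor and upgrades the exponent from $2|\Lb|$ to $3|\Lb|$, through a carefully tuned choice of slice width that balances exponential decay against the accumulated sum of tails. A na\"ive shortcut---using the set-theoretic inclusion $\{a+b\ge\tau\} \subseteq \{a\ge\tau/2\} \cup \{b\ge\tau/2\}$ together with two invocations of Theorem~\ref{thm:dist} (one for $\Hin$, one for $\Hout$)---loses a factor of two in the exponent, so treating $\tH$ as a single local Hamiltonian is precisely what avoids that loss.
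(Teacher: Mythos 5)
Your treatment of \eqref{eq:t-gt-eps} is correct and follows exactly the paper's route: re-run Theorem~\ref{thm:dist} (i.e.\ Lemma~\ref{lem:normphi} with $A=Q_{[\tau,\infty)}$ fed into Theorem~\ref{thm:expE} with $R=|\Lb|$) against the Hamiltonian $\Hin+\Hout$, then remove the ground-energy offsets using $\Ein+\Eout\le\Egs$, which holds because $\Hb\ge 0$.

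For \eqref{eq:eps-gt-t}, however, your plan does not close. You feed $A=Q_{[0,\tau]}$ into Theorem~\ref{thm:expE}, obtaining the sandwich $\norm{\Pi_{[\eps,\infty)}Q_{[0,\tau]}\Pi_{[0,\eps']}}\le e^{-\lambda(\eps-\eps'-2|\Lb|)}$, with spectral projectors of $H$ on \emph{both} sides. But the Lemma~\ref{lem:normphi}-type slicing must cut $\ket{\phi}=\Pi_{[\eps,\infty)}Q_{[0,\tau]}\ket{\psi}$ by eigenspaces of the Hamiltonian whose spectral constraint it inherits, namely $\Hin+\Hout$ (the only constraint present is the $Q_{[0,\tau]}$ one; $\ket{\psi}$ is otherwise arbitrary). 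Your ``summing over $H$-energy slices below $\eps$'' cannot work because there is no a priori control over $\Pi_{[\eps',\infty)}Q_{[0,\tau]}\ket{\psi}$ for $\eps'\ge\eps$ --- bounding such terms is exactly what \eqref{eq:eps-gt-t} asserts. What you actually need is the complementary sandwich $\norm{Q_{[\tau',\infty)}\,\Pi_{[\eps,\infty)}\,Q_{[0,\tau]}}\le e^{-\lambda(\tau'-\tau-2|\Lb|)}$, obtained by re-running the Hastings conjugation with $e^{\pm s(\Hin+\Hout)}$ in place of $e^{\pm sH}$ (Lemma~\ref{lem:Hadamard-bound} applies unchanged to $\Hin+\Hout$) and with the inserted operator $A=\Pi_{[\eps,\infty)}$, which commutes with $H$ and hence satisfies $[\Hin+\Hout,A]=-[\Hb,A]$, giving $R=|\Lb|$. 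This is the paper's Theorem~\ref{thm:expE2}. With that sandwich, slicing by $Q_{I_j}$ in the Lemma~\ref{lem:normphi} argument yields $\norm{\phi}\le\frac{2}{\lambda^{1/2}}e^{-\lambda(\eps_\phi-\tau-2|\Lb|)}$ where $\eps_\phi$ is the $(\Hin+\Hout)$-energy of $\ket{\phi}$, and the lower bound $\eps_\phi\ge\eps-|\Lb|$ (from $\Hin+\Hout\ge H-|\Lb|$ and $\ket{\phi}$ living in the range of $\Pi_{[\eps,\infty)}$) turns $2|\Lb|$ into the claimed $3|\Lb|$.
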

The proof follows the same
lines as \Thm{thm:product} with some small modifications.



An immediate  corollary of this theorem is the following bound on the
energy distribution of a product state:
\begin{corol}[Energy distribution of a product state]
  Under the same conditions of \Thm{thm:product}, let
  $\ket{\psi_{\Lin}}$ be an eigenstate of $\Hin$ with energy
  $\eps_{\Lin}$ defined on the Hilbert space supported by the
  particles of $\Lin$, and let $\ket{\psi_{\Lout}}$ be an eigenstate
  of $\Hout$ with energy $\eps_{\Lout}$ defined on the
  particles of
  $\Lout$, and set $\ket{\psi}\EqDef
  \ket{\psi_{\Lin}}\otimes\ket{\psi_{\Lout}}$. Then for any
  eigenstate $\ket{\eps}$ of $H$ with energy $\eps$, 
  \begin{align}
    |\braket{\eps}{\psi}| 
      \le \frac{2}{\lambda^{1/2}} 
        e^{-\lambda(|\eps_{\Lin} + \eps_{\Lout}-\eps| - 3|\Lb|)} \,.
  \end{align}
\end{corol}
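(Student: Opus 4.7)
The plan is to recognize that this corollary is essentially a direct specialization of Theorem~\ref{thm:product} to single eigenstates on each side, followed by a case split on the sign of $\eps_L + \eps_{L^c} - \eps$.

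First, I observe that since $\ket{\psi_L}$ and $\ket{\psi_{L^c}}$ are eigenstates of $H_L$ and $H_{L^c}$ respectively (and these act on disjoint sets of particles so they commute), the product state $\ket{\psi} = \ket{\psi_L} \otimes \ket{\psi_{L^c}}$ is an eigenstate of $H_L + H_{L^c}$ with eigenvalue $\tau \EqDef \eps_L + \eps_{L^c}$. Hence if I let $Q_{\{\tau\}}$ denote the spectral projector of $H_L + H_{L^c}$ onto its eigenvalue $\tau$ (or onto the singleton interval $\{\tau\}$), then $Q_{\{\tau\}}\ket{\psi} = \ket{\psi}$. Likewise, the spectral projector $\Pi_{\{\eps\}}$ of $H$ satisfies $\Pi_{\{\eps\}}\ket{\eps} = \ket{\eps}$.

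Second, I upper bound the overlap by a projector operator norm:
\begin{align*}
|\braket{\eps}{\psi}|
 = |\bra{\eps}\Pi_{\{\eps\}} Q_{\{\tau\}}\ket{\psi}|
 \le \norm{\Pi_{\{\eps\}} Q_{\{\tau\}}}.
\end{align*}
Now I split into cases. If $\tau > \eps$, then $\Pi_{\{\eps\}} \le \Pi_{[0,\eps]}$ and $Q_{\{\tau\}} \le Q_{[\tau,\infty)}$ as projectors, so $\norm{\Pi_{\{\eps\}} Q_{\{\tau\}}} \le \norm{\Pi_{[0,\eps]} Q_{[\tau,\infty)}}$, and the first bound of Theorem~\ref{thm:product} gives $\frac{2}{\lambda^{1/2}} e^{-\lambda(\tau-\eps-3|\Lb|)}$. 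Symmetrically, if $\eps > \tau$, the analogous domination $\Pi_{\{\eps\}} \le \Pi_{[\eps,\infty)}$ and $Q_{\{\tau\}} \le Q_{[0,\tau]}$ together with the second bound of Theorem~\ref{thm:product} yields $\frac{2}{\lambda^{1/2}} e^{-\lambda(\eps-\tau-3|\Lb|)}$. Combining the two cases into the single expression with $|\tau - \eps|$ in the exponent gives the claimed inequality (and the case $\tau = \eps$ is trivial since the exponential exceeds $1$ once $|\tau-\eps| < 3|\Lb|$, making the bound vacuous there).

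There is no real obstacle here: the only subtlety is the reduction from rank-one states to spectral-interval projectors, which is clean because operator-norm dominance of projectors is preserved under products. Thus the corollary follows essentially for free from Theorem~\ref{thm:product}, once one notes that $\ket{\psi}$ lives in a single eigenspace of $H_L + H_{L^c}$.
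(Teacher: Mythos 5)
The proposal is correct and is precisely the argument the paper leaves implicit by labeling this an ``immediate corollary'': recognize that $\ket{\psi}$ is an eigenstate of $\Hin+\Hout$ with eigenvalue $\tau=\eps_{\Lin}+\eps_{\Lout}$, bound the overlap $|\braket{\eps}{\psi}|$ by $\norm{\Pi_{[0,\eps]}Q_{[\tau,\infty)}}$ or $\norm{\Pi_{[\eps,\infty)}Q_{[0,\tau]}}$ according to the sign of $\tau-\eps$, and apply the two halves of \Thm{thm:product}. The projector-domination step is sound (write $\Pi_{\{\eps\}}Q_{\{\tau\}}=\Pi_{\{\eps\}}\,\Pi_{[0,\eps]}Q_{[\tau,\infty)}\,Q_{\{\tau\}}$ and use submultiplicativity), and the edge case $\tau=\eps$ is correctly dismissed as vacuous.
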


We now turn to our final result, which is one possible application
of our main results.  When studying the physics of a quantum lattice
spin system, it is often desirable to approximate the Hamiltonian
$H$ by a new Hamiltonian $\tH$ that is identical to $H$ in some
local region, but nevertheless has a bounded norm that does not
scale extensively with the system size. This restriction on the norm
is necessary, for example, when one wants to approximate the
groundspace projector using a low-degree polynomial of $H$. For a
polynomial of a fixed degree, the quality of the approximation
depends crucially on the norm of $H$ --- see \Ref{ref:Arad-AL13} for
more details. A natural way to achieve this is by truncating all the
energy levels of the Hamiltonian \emph{outside} the interesting
region at some energy scale $\tau$. For consistency reasons, we
denote the ``interesting region'', which we wish to keep local, by
$\Lout$, and the region whose energies are to be truncated by
$\Lin$. The exact definition of $\tH$ is then
\begin{deff}[The truncated Hamiltonian $\tH$]
\label{def:Ht} 
  Let $L$ be a subset of particles with its associated decomposition
  $H = \Hin + \Hb + \Hout $ as in \Sec{sec:settings}, and let
  $\tau>0$ be some fixed energy truncation scale. Let $P_{[0,\tau)},
  P_{[\tau,\infty)}$ be spectral projections associated with $\Hin$.
  Then the truncation of $\Hin$ is the Hamiltonian
  \begin{align}
  \label{def:tHin}
    \tHin \EqDef \Hin P_{[0,\tau)} + \tau P_{[\tau,\infty)} \,,
  \end{align}
  and the truncation of $H$ (with respect to $L$) is the Hamiltonian
  \begin{align}
  \label{def:tH}
    \tH \EqDef \tHin + \Hb + \Hout \,.
  \end{align}
  
  Eigenstates of $\tH$ will be denoted by $\ket{\tilde{\psi}_0},
  \ket{\tilde{\psi}_1}, \ket{\tilde{\psi}_2},\ldots$, and their
  corresponding energy levels by $\tilde{\eps}_0
  \le\tilde{\eps}_1\le\tilde{\eps}_2\le \ldots$. We also denote a
  projection into the subspace of eigenstates of $\tH$ with energies
  in the range $I$ by $\tilde{\Pi}_I$.
\end{deff}
We note that the norm of the truncated Hamiltonian $\tH$ is bounded
by $\norm{\tH} \le |\Lout| + |\Lb| + \tau$, so if $\Lout$ and
$\tau$ are of constant size, then so is $\norm{\tH}$. In what
follows, we shall always assume that $\tau$ is a fixed constant.

This definition of the truncated Hamiltonian would only be useful 
if $\tH$ is a good approximation to $H$, at least for the
lower parts of the spectrum. The following theorem uses
\Thm{thm:dist} and \Thm{thm:expE} to prove that this is indeed the
case: the lower part of the spectrum of $H$ and $\tH$ are
exponentially close to each other in $\tau$.
\begin{theorem}
\label{thm:lowspec} 
  The low energy subspaces and spectrum of $H$ and $\tH$ 
  are exponentially close in the following sense:
  \begin{description}
    \item [(i)]
      \begin{align}
      \label{eq:Pi-bound}
        \norm{(H-\tH)\Pi_{[0,\eps]}} 
          &\le \frac{6}{\lambda^{3/2}} 
            e^{-\lambda(\Delta\tau-\Delta\eps-3|\Lb|)} \,,
      \end{align}
      and
      \begin{align}
      \label{eq:tPi-bound}
        \norm{(H-\tH)\tPi_{[0,\eps]}} 
          &\le \frac{6}{\lambda^{3/2}} 
            e^{-\lambda(\Delta\tau-\Delta\teps-33|\Lb|)} \,,
      \end{align}
      where $\Delta \eps \EqDef \eps-\Egs$,
      $\Delta\teps\EqDef \eps-\tEgs$, and $\Delta\tau\EqDef \tau-\Ein$.
      
    \item  [(ii)] If $\eps_0 \leq \eps_1 \leq \eps_2 \dots$ 
      (respectively $\teps_0 \leq \teps_1 \leq \teps_2 \dots$) are
      the list of eigenvalues of $H$ (respectively $\tH$) in
      increasing order (with multiplicity) then for $\eps_j\le\eps$
      \begin{align} 
      \label{e:spectrum}
         \eps_j - \frac{6}{\lambda^{3/2}} 
          e^{-\lambda(\Delta\tau-\Delta\teps-33|\Lb|)} 
            \le \teps_j \le \eps_j \,.
      \end{align}   
  \end{description}
  
\end{theorem}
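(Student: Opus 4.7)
Part (i) will be the main work; part (ii) follows from it by a short min-max argument.

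\emph{Part (i), first inequality.} The starting observation is that $H - \tH = V := (\Hin - \tau)P_{[\tau,\infty)}$. Writing $V$ through its spectral integral representation
\begin{align*}
V \;=\; (\Hin - \tau)P_{[\tau,\infty)} \;=\; \int_{0}^{\infty} P_{[\tau+t,\infty)}\,dt
\end{align*}
(both sides agree on each eigenspace of $\Hin$), the triangle inequality for Bochner integrals yields
\begin{align*}
\|V\,\Pi_{[0,\eps]}\| \;\le\; \int_{0}^{\infty}\|P_{[\tau+t,\infty)}\Pi_{[0,\eps]}\|\,dt.
\end{align*}
Substituting the bound of \Thm{thm:dist} into the integrand and integrating over $t$ picks up an extra factor $1/\lambda$ out front and immediately produces \eqref{eq:Pi-bound} (up to the prefactor constant).

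\emph{Part (i), second inequality --- the main obstacle.} The low-energy projector is now $\tPi_{[0,\eps]}$, and \Thm{thm:dist} does not apply to $\tH$ directly because $\tHin$ is non-local. My plan has two steps. First, establish an analogue of \Thm{thm:dist} for $\tH$ of the form
\begin{align*}
\|P_{[\tau+t,\infty)}\tPi_{[0,\eps]}\| \;\le\; \tfrac{2}{\lambda^{1/2}}\,e^{-\lambda(\tau + t - \Ein - (\eps - \tEgs) - C|\Lb|)}.
\end{align*}
The key enabling observation is that although $\tHin$ is non-local, it is bounded: $\|\tHin\| \le \tau = \orderof{1}$. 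Hence the Hadamard/nested-commutator strategy behind \Thm{thm:expE} still goes through for an operator $A$ commuting with $\Hin$ (in particular for $A = P_{[\tau+t,\infty)}$, which is a function of $\Hin$ and hence commutes with $\tHin$ as well): commutators $[\tHin, \cdot]$ cost at most $2\tau$ and commutators with $\Hb,\Hout$ retain their usual local behavior. Second, plug the analogue into the same Bochner-integral argument as in the first inequality to obtain \eqref{eq:tPi-bound}. The technically delicate part --- and where I expect most of the effort to go --- is tracking how the $|\Lb|$ factors accumulate through the adapted argument: each triangle inequality between bounds for $H$ and $\tH$, and each application of the analogue of \Thm{thm:expE}, contributes an additive $|\Lb|$ term to the exponent; the constant $33$ arises from the fixed number of such passes required to close the argument.

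\emph{Part (ii).} Since $V = (\Hin - \tau)P_{[\tau,\infty)} \ge 0$ we have $\tH \le H$ as operators, and the Courant--Fischer min-max principle immediately gives $\teps_j \le \eps_j$, which is the right inequality in \eqref{e:spectrum}. For the left inequality, apply min-max to $H$ using the $(j{+}1)$-dimensional test subspace $W = \mathrm{Im}\,\tPi_{[0,\teps_j]}$: for any unit $\psi\in W$,
\begin{align*}
\langle\psi|H|\psi\rangle \;\le\; \langle\psi|\tH|\psi\rangle + \|(H - \tH)\tPi_{[0,\teps_j]}\|
\;\le\; \teps_j + \tfrac{6}{\lambda^{3/2}}\,e^{-\lambda(\Delta\tau - \Delta\teps_j - 33|\Lb|)}
\end{align*}
by \eqref{eq:tPi-bound}. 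Taking the max over $\psi\in W$ and invoking min-max for $\eps_j$ bounds $\eps_j$ by the same right-hand side, which is exactly the left inequality of \eqref{e:spectrum}.
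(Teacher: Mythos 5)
Your treatment of the first inequality of part (i) and of part (ii) is correct and essentially the same as the paper's. For the first inequality, your continuous (Bochner) integral
\begin{align*}
(\Hin-\tau)P_{[\tau,\infty)} \;=\; \int_0^\infty P_{[\tau+t,\infty)}\,dt
\end{align*}
replaces the paper's discrete decomposition of $[\tau,\infty)$ into slices of width $h=\ln 2/\lambda$; both lead to the same geometric-sum/integral and differ only in the prefactor constant (yours is actually slightly tighter: $2/\lambda^{3/2}$ versus $6/\lambda^{3/2}$). For part (ii), your min-max argument with the test subspace $\mathrm{Im}\,\tPi_{[0,\teps_j]}$ is the Rayleigh-quotient version of the paper's $\norm{PAP}$ characterization and is fine; the only thing to add is the observation (which the paper makes explicitly) that $\teps_j\le\eps_j\le\eps$ implies $\Delta\teps_j\le\Delta\teps$, so the bound in \eqref{eq:tPi-bound} may be stated with the uniform quantity $\Delta\teps$.

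The gap is in your plan for the second inequality of part (i), which is the technical heart of the theorem. You claim the Hadamard/nested-commutator strategy of \Thm{thm:expE} goes through for $\tH$ because ``$\tHin$ is bounded: $\norm{\tHin}\le\tau$,'' so each commutator with $\tHin$ costs at most $2\tau$. Boundedness is not the obstacle; non-locality is, and the cost you are dismissing is fatal. Once $\tHin$ is allowed to appear in the nested commutators, the effective $R$ of \Lem{lem:Hadamard-bound} becomes $R=|\Lb|+\tau$ rather than $|\Lb|$: at every level $j\ge 2$, the single non-local term $\tHin$ is a valid choice (it fails to commute with every boundary term $h_{X_i}$, $X_i\in E_{\Lb}$), contributing its full norm $\tau$ to the sum $\sum_{X_j}\norm{h_{X_j}}$. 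Tracing this through \Eq{eq:Hastings} and the optimization over $s$ gives a bound of the form $e^{-\lambda(\eps'-\eps-2(|\Lb|+\tau))}$, i.e., with an additive $2\tau$ overhead. When you then feed this into the analogue of \Lem{lem:normphi} and lower-bound the energy $\eps_\phi\ge\tau+\Eout$ (which is where the $\Delta\tau=\tau-\Ein$ enters), the $\Delta\tau$ in the exponent is overwhelmed by the $-2\tau$: you land on $e^{-\lambda(\Delta\tau-\Delta\teps-3|\Lb|-2\tau)}=e^{-\lambda(-\tau-\Ein-\Delta\teps-3|\Lb|)}\ge 1$, a vacuous bound. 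In other words, the exponential improvement in $\tau$ that the theorem is after is exactly cancelled by the $\tau$-dependent overhead your Hadamard bound introduces. This is not a matter of ``tracking how the $|\Lb|$ factors accumulate''; the $\tau$ factor is the problem, and it makes the approach collapse.

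The paper's way around this is structurally different: it moves to the interaction picture relative to $X=\tHin+\Hout$, using the Dyson expansion (\Lem{lem:Dyson}) with $Y=\Hb$, so that $\tHin$ never enters a nested commutator at all; $A$ commutes with $\tHin+\Hout$, so those exponentials pass through $A$ for free. The entire effect of $\tHin$ is then confined to the conjugated boundary term $\Hb(s)=e^{s(\tHin+\Hout)}\Hb e^{-s(\tHin+\Hout)}$, and the key estimate is \Lem{lem:Hb-bound}: $\norm{\Hb(s)}\le 16|\Lb|$ for $s\le\lambda$, with no $\tau$ dependence. That lemma itself is nontrivial (it splits by the spectral projectors $P_\pm$ of $\Hin$ at $\tau$ and handles four blocks separately, with the $P_+OP_-$ block requiring a further slicing argument and an application of \Thm{thm:expE}). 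Your outline has no analogue of this step, and without it you cannot obtain a $\tau$-independent $R$, which is what makes \Thm{thm:dist2} and hence \eqref{eq:tPi-bound} nontrivial.
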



\section{Proof of \Thm{thm:expE}}
\label{sec:expE}

In this section we prove \Thm{thm:expE}, which serves as the
technical basis for all other theorems. Following ideas from an
unpublished result by Hastings, we fix some constant $s>0$, and write
\begin{align}
\label{eq:Hastings}
  \norm{\Pi_{[\eps',\infty)}\, A \, \Pi_{[0,\eps]}} =
  \norm{\Pi_{[\eps',\infty)}\, e^{-s H}e^{s H}\,
    A\, e^{-s H}e^{s H}\, \Pi_{[0,\eps]}}
  \le  \norm{e^{s H}Ae^{-s H}}\cdot 
    e^{-s (\eps'-\eps)} \,.
\end{align}
Our task is then to bound $\norm{e^{sH}Ae^{-sH}}$ and then find the
$s$ that minimizes the product $\norm{e^{s H}Ae^{-s H}}\cdot e^{-s
(\eps'-\eps)}$. We begin with bounding $\norm{e^{s H}Ae^{-s H}}$:
\begin{lem}
\label{lem:Hadamard-bound} 
  For any $0\le s<\frac{1}{gk}$ we have
  $\norm{e^{sH}Ae^{-sH}} \le \norm{A}\cdot(1-gks)^{-R/gk}$.
\end{lem}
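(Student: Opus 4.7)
The plan is to estimate the nested commutators $B_n \EqDef \mathrm{ad}_H^n(A)$ appearing in the Hadamard expansion
\begin{align*}
  e^{sH}Ae^{-sH} = \sum_{n=0}^\infty \frac{s^n}{n!}\,B_n,
\end{align*}
and then sum the resulting series. Concretely, I aim to establish the estimate
\begin{align*}
  \norm{B_n} \le \norm{A}\cdot R(R+gk)(R+2gk)\cdots\bigl(R+(n-1)gk\bigr),
\end{align*}
for then the negative-binomial identity $\sum_{n\ge 0}\binom{a+n-1}{n}x^n=(1-x)^{-a}$, applied with $a=R/gk$ and $x=gks$, immediately yields $\norm{e^{sH}Ae^{-sH}}\le\norm{A}(1-gks)^{-R/gk}$ on the convergence range $s<1/(gk)$.

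To prove the estimate on $\norm{B_n}$ I would expand the iterated commutator fully,
\begin{align*}
  B_n = \sum_{X_1,\dots,X_n}\bigl[h_{X_n},[h_{X_{n-1}},\dots,[h_{X_1},A]\dots]\bigr],
\end{align*}
keep only the sequences for which the multi-commutator does not vanish, and apply the triangle inequality to get $\norm{B_n}\le\norm{A}\sum_{\text{valid }\vec X}\prod_j\norm{h_{X_j}}$. Two geometric inputs are essential. First, since each $h_X$ is semi-positive definite (assumption \Eq{eq:hX}), the shift $h_X\mapsto h_X-\tfrac{1}{2}\norm{h_X}\Id$ has norm $\tfrac{1}{2}\norm{h_X}$, which via the usual $\norm{[B,C]}\le 2\norm{B}\norm{C}$ estimate upgrades the single-commutator bound to $\norm{[h_X,\cdot]}\le\norm{h_X}\norm{\cdot}$, \emph{without} the customary factor of $2$. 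Second, using the fact that disjoint local terms commute (so $[h_{X_j},h_{X_i}]=0$ whenever $X_i\cap X_j=\emptyset$), a short induction via the Jacobi identity shows that the multi-commutator vanishes unless $X_j\in E_A\cup\{X:X\cap X_i\ne\emptyset\text{ for some }i<j\}$ for every $j$: one simply ``passes'' $h_{X_j}$ inward through the commuting operators until it meets $A$, where $X_j\notin E_A$ forces the result to be zero.

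The counting step is then to bound the admissible choices of $X_j$ for a fixed prefix $X_1,\dots,X_{j-1}$. By definition $\sum_{X\in E_A}\norm{h_X}=R$, and the locality hypothesis \Eq{def:g} combined with $|X_i|\le k$ gives $\sum_{X:X\cap X_i\ne\emptyset}\norm{h_X}\le\sum_{p\in X_i}\sum_{X\ni p}\norm{h_X}\le|X_i|g\le gk$ for each $i<j$. A union bound therefore produces the prefix-independent estimate $\sum_{X_j\text{ admissible}}\norm{h_{X_j}}\le R+(j-1)gk$. Because this bound does not depend on the particular prefix, the nested sums decouple and multiply to give
\begin{align*}
  \sum_{\text{valid }(X_1,\dots,X_n)}\prod_{j=1}^n\norm{h_{X_j}} \le R(R+gk)(R+2gk)\cdots\bigl(R+(n-1)gk\bigr),
\end{align*}
which is the desired bound on $\norm{B_n}$.

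I expect the counting step to be the main obstacle, because one really needs the admissible weight at level $j$ to grow \emph{additively} (as $R+(j-1)gk$) and prefix-independently, in order to reproduce the Pochhammer pattern of $(1-gks)^{-R/gk}$. Dropping the PSD observation would cost a factor $2^n$ and shrink the convergence range to $s<1/(2gk)$, producing the strictly weaker bound $(1-2gks)^{-R/gk}$; a multiplicative (rather than additive) prefix bound would give only an exponential estimate of the form $e^{cns}$, which is useless for the subsequent optimisation over $s$ in \Thm{thm:expE}. Once the $\norm{B_n}$ bound is in hand, the final series manipulation that recognises the negative binomial is a one-line calculation.
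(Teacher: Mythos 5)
Your proposal is correct and follows essentially the same route as the paper's proof: the Hadamard (ad-exponential) expansion, the PSD-shift trick to upgrade $\norm{[h_X,\cdot]}\le 2\norm{h_X}\norm{\cdot}$ to $\norm{h_X}\norm{\cdot}$, the additive $R+(j-1)gk$ prefix-independent weight bound on admissible $X_j$, and the negative-binomial resummation. The one place where you go slightly beyond the paper is spelling out the Jacobi-identity induction that justifies the admissibility constraint ($X_j\in E_A$ or $X_j$ overlaps a previous $X_i$), which the paper asserts more tersely; that elaboration is accurate.
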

\begin{proof}
  Without loss of generality, we can assume that $\norm{A}=1$, since
  a simple scaling of the equations proves the general result. Using
  the Hadamard formula (see, for example, Lemma~5.3, pp~160 in
  \Ref{ref:Miller72}), we write
  \begin{align}
    e^{s H}Ae^{-s H} 
    &= A + s[H,A] + \frac{s^2}{2!}[H,[H,A]] + \ldots
    \EqDef \sum_{\ell=0}^\infty \frac{s^\ell}{\ell!}K_\ell
    \label{eq:expansion} \,.
  \end{align}
  Then $\norm{e^{s H}Ae^{-s H}} \le \sum_{\ell=0}^\infty
  \frac{s^\ell}{\ell!}\norm{K_\ell}$. We shall upper bound the norm
  of each $K_\ell$ separately. Clearly, for $\ell=0$,
  $\norm{K_\ell}=\norm{A}=1$. For the $\ell>0$ case, we write
  $K_\ell$ as
  \begin{align}
  \label{eq:Kell}
    K_\ell = \sum_{X_1\in E_A} \ \ \sum_{X_2|X_1}
      \ \ \sum_{X_3|(X_2,X_1)}\cdots
     \sum_{X_\ell|(X_{\ell-1},\ldots,X_1)}
     [h_{X_\ell},[h_{X_{\ell-1}},\ldots, [h_{X_1},A]\ldots]]
  \end{align}
  Above, the sum $\sum_{X_j|X_{j-1},\ldots, X_1}$ denotes a
  summation over the $X_j$ subsets for which the commutator
  $[h_{X_j},[h_{X_{j-1}},[\ldots, [h_{X_1}, A]\ldots]]$ is non-zero.
  By assumption, for the first level $[H,A]$, we can take only
  $X_1\in E_A$. Once $X_1$ is fixed, then for the next level
  $[h_{X_2},[h_{X_1},A]]$ we can take $X_2$ which is either in $E_A$
  or which does not commute with $X_1$, and so on and so forth.

  To upperbound the norm of $K_\ell$ we note that for every operator
  $O$, $\norm{[h_X,O]}\le \norm{h_X}\cdot\norm{O}$. This is because
  we may define $\tilde{h}_X\EqDef h_X - \frac{1}{2}\norm{h_X}$, and
  using the fact that $h_X$ is a non-negative operator, it
  follows that $\norm{\tilde{h}_X}\le \frac{1}{2}\norm{h_X}$ and so
  $\norm{[h_X,O]} = \norm{[\tilde{h}_X,O]} \le
  2\norm{\tilde{h}_X}\cdot\norm{O} \le \norm{h_X}\cdot\norm{O}$
  Taking the norm of \Eq{eq:Kell} and using the fact that
  $\norm{A}=1$, we get
  \begin{align*}
    \norm{K_\ell} &\le \sum_{X_1\in E_A} 
      \ \ \sum_{X_2|X_1}\ \ \sum_{X_3|(X_1,X_2)}\cdots
     \sum_{X_\ell|(X_{\ell-1},\ldots,X_1)}
      \norm{h_{X_1}}\cdots\norm{h_{X_\ell}} \\
    &= \sum_{X_1\in E_A} \norm{h_{X_1}} \sum_{X_2|X_1} \norm{h_{X_2}}
      \sum_{X_3|(X_1,X_2)}\norm{h_{X_3}}\cdots
       \sum_{X_\ell|(X_{\ell-1},\ldots,X_1)} \norm{h_{X_\ell}}\,.
  \end{align*}
  Let us now upperbound the sums. The sum over $h_{X_\ell}$ includes
  only terms that do not commute with either $A$ or one of $h_{X_1},
  \ldots, h_{X_{\ell-1}}$. By assumption, the sum of the norms of
  $h_X$ that do not commute with $A$ is bounded by $R$. The sum of
  norms of $h_X$ that do not commute with another $h_Y$ is bounded
  by $gk$ since $h_Y$ is supported on at most $k$ particles. We
  therefore conclude that 
  \begin{align*}
    \sum_{X_\ell|(X_{\ell-1},\ldots,X_1)} \norm{h_{X_\ell}} \le
    R+(\ell-1)gk \,.
  \end{align*}
  Similarly, for any $1\le j\le \ell$, we get
  \begin{align*}
    \sum_{X_j|(X_{j-1},\ldots,X_1)} \norm{h_{X_j}} \le R+(j-1)gk \,,
  \end{align*}
  and therefore
  \begin{align*}
    \norm{K_\ell} &\le R(R  + gk) \cdot (R+2gk)
      \cdots \big(R + (\ell-1) gk\big) \\
    &=(gk)^\ell r(r+1)\cdots(r+\ell-1) \,,
  \end{align*} 
  where we defined $r \EqDef \frac{R}{gk}$. Plugging this into
  \Eq{eq:expansion} gives 
  \begin{align*}
     \norm{e^{s H}Ae^{-s H}}
       &\le \sum_{\ell=0}^\infty 
         \frac{(s gk)^\ell}{\ell!}
         r(r+1)\cdots(r+\ell-1) 
       = \frac{1}{(1-s gk)^r}\,,
  \end{align*} 
  where the last equality follows from a simple Taylor expansion,
  and is valid as long as $0<1-s gk\le 1$. 
\end{proof}
Moving on, \Lem{lem:Hadamard-bound} together with \Eq{eq:Hastings}
imply
\begin{align}
\label{eq:min-s}
  \norm{\Pi_{[\eps',\infty)}\, A \, \Pi_{[0,\eps]}}
    \le \frac{e^{-s (\eps'-\eps)}}{(1-s gk)^{R/gk}}\cdot\norm{A} \,.
\end{align}  
To finish the proof we now look for $0\le s<1$ that minimizes the
RHS above. A simple calculus shows that we should pick
$s=\frac{1}{gk}\left[1-\frac{R}{\eps'-\eps}\right]$, and 
substituting it \eqref{eq:min-s} proves the first inequality in
\eqref{eq:general-expE}. To prove the second inequality, we rewrite
the expression in the exponent as
\begin{align*}
  -\lambda(\eps'-\eps-2R) 
    -\lambda\left[\eps-\eps'
      -2R\ln\Big(\frac{\eps'-\eps}{R}\Big)\right] \,,
\end{align*}
and notice that $\eps'-\eps-2R\ln\Big(\frac{\eps'-\eps}{R}\Big)\ge
0$ for every $\eps'-\eps>0$.

\section{Proof of \Thm{thm:dist}}
\label{sec:dist}

We begin with a simple lemma, which upperbounds the norm of any
state of the form $\ket{\phi} = A\Pi_{[0,\eps]}\ket{\psi}$ in terms
of its energy with respect to $H$.
\begin{lem}
\label{lem:normphi} 
  Under the same conditions of \Thm{thm:expE},
  let $\ket{\psi}$ be an arbitrary normalized state and define
  $\ket{\phi}\EqDef A\Pi_{[0,\eps]}\ket{\psi}$ and its energy
  $\eps_\phi \EqDef \frac{1}{\norm{\phi}^2}\bra{\phi}H\ket{\phi}$.
  Then,
  \begin{align}
  \label{eq:normphi}
    \norm{\phi} \le \norm{A}\cdot
      \frac{2}{\lambda^{1/2}} e^{-\lambda(\eps_\phi-\eps-2R)} \,,      
  \end{align}
  where $R$ is defined as in \Thm{thm:expE}.
\end{lem}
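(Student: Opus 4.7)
The plan is to turn the spectral tail estimate of \Thm{thm:expE} into a bound on $\norm{\phi}$ by viewing $\eps_\phi$ as the first moment of the $H$-spectral distribution of $\ket{\phi}$. In other words: if the average energy of $\ket{\phi}$ (with respect to $H$) is forced to sit far above the ``natural'' scale $\eps+2R$ set by \Thm{thm:expE}, then $\ket{\phi}$ must carry correspondingly little total weight.

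The first step is to apply \Thm{thm:expE} directly, using that $\ket{\psi}$ is normalized, to obtain for every $t\ge 0$ the tail bound
\begin{align*}
  f(t)\;\EqDef\;\norm{\Pi_{[t,\infty)}\ket{\phi}}^2
   \;\le\; \norm{A}^2\,e^{-2\lambda(t-\eps-2R)},
\end{align*}
which is vacuous (and subsumed by $f(t)\le\norm{\phi}^2\le\norm{A}^2$) for $t<\eps+2R$ but becomes sharp well above that threshold.

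Next, because $H\ge 0$, the spectral theorem gives the ``area-under-the-tail'' identity $\bra{\phi}H\ket{\phi}=\int_0^\infty f(t)\,dt$, the operator analogue of $\mathbb{E}[X]=\int_0^\infty \Pr[X\ge t]\,dt$. Combined with the defining relation $\bra{\phi}H\ket{\phi}=\eps_\phi\norm{\phi}^2$, and with the two complementary bounds on $f$, splitting the integral at a threshold $t^*\ge\eps+2R$ yields
\begin{align*}
  \eps_\phi\norm{\phi}^2
    \;\le\; t^*\norm{\phi}^2
    + \frac{\norm{A}^2}{2\lambda}\,e^{-2\lambda(t^*-\eps-2R)}.
\end{align*}
Optimising the right-hand side by placing $t^*$ at the crossover of the two bounds on $f$, namely $t^*=\eps+2R+\frac{1}{2\lambda}\ln\bigl(\norm{A}^2/\norm{\phi}^2\bigr)$, collapses the second term into $\frac{1}{2\lambda}\norm{\phi}^2$, and rearranging produces an inequality of the form $\norm{\phi}^2\le C\,\norm{A}^2\,e^{-2\lambda(\eps_\phi-\eps-2R)}$ with an $O(1)$ constant $C$. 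Since $\lambda\le 1$ in the models under consideration, this implies the stated bound with prefactor $2/\lambda^{1/2}$.

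The main obstacle is simply to verify that the splitting is meaningful in the non-vacuous regime: if $\eps_\phi\le\eps+2R$ the claim is automatic from $\norm{\phi}\le\norm{A}$, so the interesting case is $\eps_\phi>\eps+2R$, and one must check that the crossover $t^*$ falls in the range where the tail bound of \Thm{thm:expE} has teeth. Beyond that, the lemma is essentially a first-moment repackaging of the already-established tail estimate, and no genuinely new technique is needed.
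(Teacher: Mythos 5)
Your proof is correct, and it rests on the same core strategy as the paper's: use \Thm{thm:expE} to show the $H$-spectral weight of $\ket{\phi}$ decays exponentially above $\eps+2R$, turn that into a bound relating the mean energy $\eps_\phi$ to $\norm{\phi}$, and invert. The only difference is in how the first-moment bound is implemented. The paper discretises: it slices the energy axis into bins of width $h=\frac{\ln 2}{2\lambda}$, writes $\ket{\phi}=\ket{\phi_{-1}}+\sum_{j\ge0}\ket{\phi_j}$, upper-bounds $\bra{\phi_j}H\ket{\phi_j}$ by the bin's right endpoint, sums $\sum_j(j+1)2^{-j}=4$, and then fixes the cutoff at $\mu=\eps_\phi-1$. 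You instead invoke the exact layer-cake identity $\bra{\phi}H\ket{\phi}=\int_0^\infty\norm{\Pi_{[t,\infty)}\ket{\phi}}^2\,dt$ valid for $H\ge0$, split at $t^*$, and optimise $t^*$ exactly. Running your arithmetic through gives $\eps_\phi\le t^*+\frac{1}{2\lambda}$ with $t^*=\eps+2R+\frac{1}{2\lambda}\ln(\norm{A}^2/\norm{\phi}^2)$, hence $\norm{\phi}\le\norm{A}\,e^{1/2}\,e^{-\lambda(\eps_\phi-\eps-2R)}$; since $\lambda=\frac{1}{2gk}\le\frac12$, one has $e^{1/2}\le 2\sqrt{2}\le 2/\lambda^{1/2}$, so the claimed prefactor is recovered. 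Your continuous version is a little cleaner (no arbitrary bin width, no auxiliary geometric sum) and in fact gives a prefactor $e^{1/2}$ that is uniform in $\lambda$, slightly sharper than the paper's $2/\lambda^{1/2}$; beyond that, the two arguments are the discrete and continuous faces of the same Chernoff-style calculation.
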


\begin{proof}
  As with the proof of \Thm{thm:expE}, we can assume without loss of
  generality that $\norm{A}=1$. Let $\mu$ be some energy scale to be
  set later, define $h\EqDef \frac{\ln 2}{2\lambda}$ and write
  \begin{align*}
    \ket{\phi} &= \Pi_{[0,\mu)}\ket{\phi} + \sum_{j=0}^\infty 
     \Pi_{[\mu+jh,\mu+(j+1)h)}\ket{\phi} 
     \EqDef \ket{\phi_{-1}}  
     + \sum_{j=0}^\infty\ket{\phi_j} \,.
  \end{align*}
  \Thm{thm:expE} establishes that the norms of the $\ket{\phi_{j}}$
  decay exponentially, i.e.,
  \begin{align}
  \label{eq:z6}
    \norm{\phi_j}^2  
      = \norm{\Pi_{[\mu+jh,\mu+(j+1)h)}A\Pi_{[0,\eps]}\ket{\psi}}^2 
      &\le \norm{\Pi_{[\mu+jh,\infty)}A\Pi_{[0,\eps]}\ket{\psi}}^2  
       \le e^{-2\lambda(\mu+jh-\eps-2R)} \,. 
  \end{align}
  We use this decomposition to bound the energy of $\ket{\phi}$ with
  respect to $H$:
  \begin{align} 
  \label{eq:z5b}
    \bra{\phi}H\ket{\phi} &=  \bra{\phi_{-1}}H\ket{\phi_{-1}} 
      + \sum_{j=0}^\infty  \bra{\phi_j}H\ket{\phi_j} \\
    &\le \mu \norm{\phi_{-1}}^2 
      + \sum_{j=0}^{\infty} ( \mu + (j+1)h) \norm{\phi_j}^2 
      \nonumber \\
    &\le \mu \norm{\phi}^2 
      + h\sum_{j=0}^{\infty} (j+1) \norm{\phi_j}^2 \,.
      \nonumber
  \end{align}
  We bound the rightmost sum using \eqref{eq:z6}:
  \begin{align}
  \label{eq:z7}      
    \sum_{j=0}^{\infty} (j+1) \norm{\phi_j}^2 
      &\le  e^{-2\lambda(\mu-\eps-2R)}
        \sum_{j=0}^\infty (j+1) e^{-2\lambda h j}  
      = e^{-2\lambda(\mu-\eps-2R)} \sum_{j=0}^\infty (j+1) 2^{-j} \,.
  \end{align} 
  The final summand in \eqref{eq:z7} is equal to 4 by a standard
  equality; combining this with \eqref{eq:z5b} yields the bound of
  the energy as:
  \begin{align}
    \bra{\phi}H\ket{\phi}=\eps_\phi\norm{\phi}^2 
      &\le \mu\norm{\phi}^2 
        + 4he^{-2\lambda(\mu-\eps-2R)} \,.
  \end{align}
  Choosing $\mu\EqDef \eps_\phi-1$, rearranging terms and taking a
  square root,  we get
  \begin{align*}
      \norm{\phi} 
        \le (4h)^{1/2}e^{\lambda}
        e^{-\lambda(\eps_\phi-\eps-2R)} 
      \le \frac{2}{\lambda^{1/2}} e^{-\lambda(\eps_\phi-\eps-2R)}\,,
  \end{align*}
  where the last inequality follows from the fact that $\lambda\le
  \frac{1}{2}$ and so $(4h)^{1/2}e^{\lambda}\le 2/\lambda^{1/2}$.
  This proves \eqref{eq:normphi} for $\norm{A}=1$.
\end{proof}

The proof of \Thm{thm:dist} will follow by applying
\Lem{lem:normphi} with $A=P_{[\tau,\infty)}$. In this case, $[A,\Hin] =
[A,\Hout]=0$, so the only non-commuting terms in $[A,H]$ come from
$\Hb$ and thus we can take $R=|\Lb|$ and
\begin{align}
\label{eq:normphi2}
  \norm{\phi} \le \frac{2}{\lambda^{1/2}}
    \cdot e^{-\lambda(\eps_\phi-\eps-2|\Lb|)} \,.
\end{align}
We now lowerbound $\eps_\phi$. By definition,
\begin{align}
  \eps_\phi &= \frac{1}{\norm{\phi}^2}\bra{\phi}\Hin\ket{\phi}
    +\frac{1}{\norm{\phi}^2}\bra{\phi}\Hb\ket{\phi}
    +\frac{1}{\norm{\phi}^2}\bra{\phi}\Hout\ket{\phi}
    \ge \tau + \Eout \,. \label{eq:z7b}
\end{align}
We can further lower bound the right hand side by noting that
$\eps_0 \le \Ein + |\Lb| + \Eout$,\footnote{This follows from
bounding the energy of the total Hamiltonian $H$ with respect to the
product state $\ket{\psi_0(\Lin)}\otimes\ket{\psi_0(\Lout)}$, where
$\ket{\psi_0(\Lin)}$ and $\ket{\psi_0(\Lout)}$ the groundstates of
$\Hin$ and $\Hout$ respectively. On one hand, it must be
lowerbounded by $\Egs$, the groundenergy of $H$, and on the other
hand, it must be upperbounded by $\Ein+\Eout+|\Lb|$ since by the
definition of $|\Lb|$ in \Eq{def:sizes}, the norm of $\Hb$ is
upperbounded by $|\Lb|$.} and therefore $\Eout \ge
\Egs-\Ein-|\Lb|$.  Using this in \eqref{eq:z7b} gives 
\begin{align*}
  \eps_\phi \ge \eps_0 + \tau-\Ein -|\Lb|  
    = \eps_0 + \Delta\tau - |\Lb|\,,
\end{align*}
and substituting this in \eqref{eq:normphi2} yields
\eqref{eq:EnergyDist}.

\section{Proof of \Thm{thm:product}}
\label{sec:product}

The proof of \Thm{thm:product} is very similar to that of
\Thm{thm:dist}; we will only give the outline of the proof and
highlight where things are different.
 
To prove \eqref{eq:t-gt-eps}, note that \Thm{thm:dist} holds in the
slightly modified context (the proof is identical) of replacing the
Hamiltonian $H_L$ with the Hamiltonian $H_L + H_{L^c}$ and replacing
$P_{[\tau,\infty)}$ with $Q_{[\tau,\infty)}$. Therefore, it implies:
\begin{align*} 
  \norm{Q_{[\tau,\infty)}\Pi_{[0,\eps]}} 
    \le \frac{2}{\lambda^{1/2}} 
      e^{-\lambda(\Delta\tau-\Delta \eps - 3|\Lb|)} \,,
\end{align*}
where, $\Delta\tau \EqDef \tau-(\Ein+\Eout)$ and $\Delta\eps \EqDef
\eps - \Egs$. Expanding the terms in the exponential, we get
$\Delta\tau-\Delta \eps - 3|\Lb| = [\Ein+\Eout-\Egs] + \tau-\eps -
3|\Lb|$, and as $\Ein+\Eout \le \Egs$,\footnote{This is because
if $\ket{\psi_0}$ is the groundstate of $H$ then $\Ein+\Eout\le
\bra{\psi_0}\Hin\ket{\psi_0} + \bra{\psi_0}\Hout\ket{\psi_0} \le
\bra{\psi_0}\Hin\ket{\psi_0} + \bra{\psi_0}\Hout\ket{\psi_0} +
\bra{\psi_0}\Hb\ket{\psi_0} = \bra{\psi_0}H\ket{\psi_0} = \Egs$.}
it follows that $e^{-\lambda(\Delta\tau-\Delta \eps - 3|\Lb|)}\le 
e^{-\lambda(\tau-\eps - 3|\Lb|)}$, and therefore
\begin{align*} 
  \norm{Q_{[\tau,\infty)}\Pi_{[0,\eps]}} 
    \le \frac{2}{\lambda^{1/2}}
      e^{-\lambda(\tau-\eps - 3|\Lb|)} \,.
\end{align*}
The final step in proving \eqref{eq:t-gt-eps}, is to use the
identities $\norm{Q_{[\tau,\infty)}\Pi_{[0,\eps]}} =
\norm{(Q_{[\tau,\infty)}\Pi_{[0,\eps]})^\dagger} =
\norm{\Pi_{[0,\eps]}Q_{[\tau,\infty)}}$.

To prove \eqref{eq:eps-gt-t}, we first view it as a
``complementary'' version of \eqref{eq:t-gt-eps}, in which the roles
of the main Hamiltonian $H$ (with corresponding spectral operator
$\Pi_{[0,\eps]}$) and the partial Hamiltonian $\Hin+\Hout$ (with
corresponding spectral operator $Q_{[\tau, \infty)}$) have been
switched.  Therefore, to prove it, we shall need the following
``complementary" version of \Thm{thm:expE}:
\begin{theorem}
\label{thm:expE2}
  Under the same conditions of \Thm{thm:product}, let $A$ be an
  operator that commutes with $H$. Then
  \begin{align}
    \norm{Q_{[\tau',\infty)} A Q_{[0,\tau]} }
      \le e^{-\lambda(\tau'-\tau-2|\Lb|)} \cdot \norm{A}\,.
  \end{align}
\end{theorem}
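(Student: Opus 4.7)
The plan is to mirror the proof of Theorem~\ref{thm:expE}, with the Hamiltonian $H$ there replaced by $H' \EqDef \Hin + \Hout$. The assumption $[A,H]=0$ plays the role of keeping the effective parameter $R$ equal to $|\Lb|$. I begin with Hastings' identity: for any $s\ge 0$,
\begin{align*}
  \norm{Q_{[\tau',\infty)} A Q_{[0,\tau]}}
    &= \norm{Q_{[\tau',\infty)} e^{-sH'} \cdot e^{sH'} A e^{-sH'} \cdot e^{sH'} Q_{[0,\tau]}} \\
    &\le e^{-s(\tau'-\tau)} \cdot \norm{e^{sH'} A e^{-sH'}},
\end{align*}
so the task reduces to bounding $\norm{e^{sH'} A e^{-sH'}}$ and optimizing over $s$.

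The key structural observation is that $[A,H]=0$ lets one rewrite the first-order commutator of the Hadamard expansion $e^{sH'} A e^{-sH'} = \sum_{\ell\ge 0} \tfrac{s^\ell}{\ell!} K_\ell$ (with $K_\ell \EqDef \mathrm{ad}_{H'}^\ell(A)$) as
\begin{align*}
  K_1 = [H', A] = [H - \Hb, A] = -[\Hb, A],
\end{align*}
so that the seed involves only boundary terms with total norm at most $|\Lb|$. The goal is then to establish the analog of Lemma~\ref{lem:Hadamard-bound} with $R=|\Lb|$, namely $\norm{e^{sH'} A e^{-sH'}} \le \norm{A}\cdot(1-sgk)^{-|\Lb|/gk}$ for $0\le s<1/gk$. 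Once this is shown, substituting into the Hastings bound and minimizing at $s=\tfrac{1}{gk}(1-|\Lb|/(\tau'-\tau))$ gives, verbatim from the proof of Theorem~\ref{thm:expE}, the claimed bound $\norm{A}\cdot e^{-\lambda(\tau'-\tau-2|\Lb|)}$.

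The main obstacle is proving the norm estimate $\norm{K_\ell}\le (gk)^\ell\,r(r+1)\cdots(r+\ell-1)\,\norm{A}$ with $r=|\Lb|/gk$ by induction on $\ell$. A direct reuse of the combinatorial counting from Lemma~\ref{lem:Hadamard-bound} fails: at higher levels, the iterated commutators involve terms of $H'$, and individual $h_X$ with $X\in E_\Lin\cup E_\Lout$ may have arbitrarily large $[h_X,A]$ even though their sum equals $-[\Hb,A]$. The remedy is to repeatedly apply Jacobi's identity together with $[H,A]=0$, which yields $[H,[B,A]]=[[H,B],A]$ for any operator $B$, so that applications of $\mathrm{ad}_H$ commute past the outermost $[\,\cdot\,,A]$. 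Splitting $\mathrm{ad}_{H'}=\mathrm{ad}_H-\mathrm{ad}_\Hb$ and iterating this rearrangement, one organizes $K_\ell$ into a sum of nested commutators whose innermost layer is always $[\Hb,A]$ and whose outer layers are multi-commutators with $\Hb$ and the iterates $[H,\Hb],[H,[H,\Hb]],\ldots$, all respecting the $gk$-locality bound of Section~\ref{sec:settings}. A careful accounting of these terms mirrors the proof of Lemma~\ref{lem:Hadamard-bound} with $R=|\Lb|$ in place of the naive $|\Lin|+|\Lout|$, and Taylor-summing via $\sum_\ell (sgk)^\ell\, r(r+1)\cdots(r+\ell-1)/\ell!=(1-sgk)^{-r}$ yields the target bound.
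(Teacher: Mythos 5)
The paper's own ``proof'' of this theorem is literally the sentence ``exactly like \Thm{thm:expE}, left as an exercise,'' so there is nothing concrete to compare against --- but that makes your observation in the middle of the proposal the most valuable thing here. You are right that a direct reuse of \Lem{lem:Hadamard-bound} breaks down: the conjugating Hamiltonian is now $H'=\Hin+\Hout$, and the relevant set $E_A$ in the Hadamard expansion must consist of \emph{interaction terms of $H'$} that fail to commute with $A$. Since in the intended application $A=\Pi_{[\eps,\infty)}$ is a highly nonlocal spectral projector of $H$, essentially \emph{every} $h_X$ with $X\in E_{\Lin}\cup E_{\Lout}$ can fail to commute with $A$, so the only safe choice is $R=|\Lin|+|\Lout|$, which is useless. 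The identity $[H',A]=-[\Hb,A]$ bounds the \emph{sum}, not the individual commutators, which is exactly what the nested-commutator counting in \Lem{lem:Hadamard-bound} needs and does not have here. So the paper's claim that the proof is ``exactly like'' \Thm{thm:expE} is itself glossing over a real gap, and you correctly put your finger on it.

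Your proposed repair via Jacobi's identity, splitting $\mathrm{ad}_{H'}=\mathrm{ad}_H-\mathrm{ad}_{\Hb}$ and pushing $\mathrm{ad}_H$ past $[\,\cdot\,,A]$ using $[H,A]=0$, is the right instinct, but the step you describe as ``a careful accounting of these terms mirrors the proof of Lemma~\ref{lem:Hadamard-bound} with $R=|\Lb|$'' is precisely the part that needs to be done and is not. When I carry it out at level $\ell=2$ I find $K_2=-[[H',\Hb],A]+[\Hb,[\Hb,A]]$, with $\norm{K_2}\le(2gk|\Lb|+|\Lb|^2)\norm{A}$; the target $(gk)^2\,r(r+1)\norm{A}=|\Lb|(|\Lb|+gk)\norm{A}$ with $r=|\Lb|/gk$ is already exceeded (the factor of $2$ comes from $[[H',\Hb],A]$, where the inner commutator is no longer a positive operator so the halving trick is lost). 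A consistent choice appears to be $r=2|\Lb|/gk$, which then yields $e^{-\lambda(\tau'-\tau-4|\Lb|)}$ rather than the claimed $e^{-\lambda(\tau'-\tau-2|\Lb|)}$. A cleaner way to close the argument --- and one that the paper itself already employs for \Lem{lem:expE2} --- is to bypass the Hadamard counting entirely and use the Dyson expansion of \Lem{lem:Dyson} with $X=H$, $Y=-\Hb$: since $[A,H]=0$, one gets $e^{sH'}Ae^{-sH'}=\bigl(\sum_j G_j(s)\bigr)\,A\,\bigl(\sum_{j'}G'_{j'}(s)\bigr)$, bounds $\norm{e^{uH}\Hb e^{-uH}}\le |\Lb|(1-ugk)^{-1}$ by applying \Lem{lem:Hadamard-bound} term-by-term to each $h_Y$ with $R=gk$, integrates to get $\sum_j\norm{G_j(s)}\le(1-sgk)^{-|\Lb|/gk}$, and then proceeds exactly as in the end of Section~\ref{sec:expE} with $R$ effectively equal to $2|\Lb|$. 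This again gives $e^{-\lambda(\tau'-\tau-4|\Lb|)}$. Either way, the constant in front of $|\Lb|$ comes out a bit larger than the paper states, which only shifts the boundary correction in \eqref{eq:eps-gt-t} and downstream from $3|\Lb|$ to $5|\Lb|$ and is harmless; but as written, your proposal does not establish the combinatorial claim it rests on, and I do not believe the precise $r=|\Lb|/gk$ bound it asserts is attainable.
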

\begin{proof}
  The proof here is exactly like the proof of \Thm{thm:expE}, and so
  we leave it as an exercise to the reader.
\end{proof}

With \Thm{thm:expE2} at our disposal, we use the same argument as in
\Lem{lem:normphi} to deduce that for every
$\ket{\phi}=\Pi_{[\eps,\infty)}Q_{[0,\tau]}\ket{\psi}$, 
\begin{align*}
  \norm{\phi} \le \frac{2}{\lambda^{1/2}}
  e^{-\lambda(\eps_\phi-\tau-3|\Lb|)} \,,
\end{align*}
where $\eps_\phi$ is the energy of $\ket{\phi}$. We complete the
proof by lower bounding $\eps_\phi$, the energy of $\ket{\phi}$.
Since $\Hin+\Hout=H-\Hb\ge H-|\Lb|$, we conclude that $\eps_\phi
\ge\eps-|\Lb|$, which gives $\norm{\phi} \le \frac{2}{\lambda^{1/2}}
e^{-\lambda(\eps-\tau-3|\Lb|)}$, thereby proving \eqref{eq:eps-gt-t}.

\section{Proof of \Thm{thm:lowspec}}
\label{sec:Ht}

We begin by proving part (i) of the theorem. Since \Thm{thm:dist}
says that the high energy spectrum of $H_L$ and the low energy
spectrum of $H$ have very little overlap, it is a natural tool for
bounding the left hand side of \eqref{eq:Pi-bound}, which can be
written as $(H-\tH)\Pi_{[0,\eps]}=(\Hin
-\tau)P_{[\tau,\infty)}\Pi_{[0,\eps]}$.  We decompose
$[\tau,\infty)= \sqcup_{j=0}^{\infty} I_j$ with $I_j\EqDef[\tau+jh,
\tau+(j+1)h)$, with $h\EqDef \frac{\ln{2}}{\lambda}$.  This allows
us to write $P_{[\tau,\infty)}= \sum_j P_{I_j}$ where $P_{I_j}$
are spectral projections associated to $\Hin$. Then by the triangle
inequality,
\begin{align*}
  \norm{(H-\tH)\Pi_{[0,\eps]}} 
    \le \sum_{j\ge 0}\norm{(\Hin-\tau)P_{I_j}\Pi_{[0,\eps]}}
    \le \sum_{j\ge 0}[\tau+(j+1)h-\tau]\cdot\norm{P_{I_j}\Pi_{[0,\eps]}}
    = h\sum_{j\ge 0}(j+1)\cdot\norm{P_{I_j}\Pi_{[0,\eps]}}\,.
\end{align*}  
Using \Thm{thm:dist} to bound each term in the summand, we have 
\begin{align*}
  \norm{P_{I_j}\Pi_{[0,\eps]}} 
    \le \frac{2}{\lambda^{1/2}}
      e^{-\lambda(\Delta\tau+jh-\Delta\eps - 3|\Lb|)} \,,
\end{align*}
and so 
\begin{align*}
  \norm{(H-\tH)\Pi_{[0,\eps]}} 
    \le \frac{2h}{\lambda^{1/2}}
      e^{-\lambda(\Delta\tau-\Delta\eps - 3|\Lb|)}
        \sum_{j\ge 0}(j+1)e^{-\lambda h j} \,.
\end{align*}  
Since $e^{-\lambda h j}=\left(\frac{1}{2}\right)^j$, then by the
identity $\sum_{j\ge 0}(j+1)2^{-j} = 4$, the RHS becomes
$\frac{8\ln 2}{\lambda^{3/2}} e^{-\lambda(\Delta\tau-\Delta\eps -
3|\Lb|)}$, and as $8\ln 2 \le 6$, we recover \eqref{eq:Pi-bound}. 

For the proof of \eqref{eq:tPi-bound} we first need an analogous
statement to \Thm{thm:dist}, which says that the overlap of the high
energy spectrum of $H_L$ and the low energy spectrum of $\tH$ has
very little overlap:
\begin{theorem}
  \label{thm:dist2}
  Let $P_{[\tau,\infty)}$ denote the projection onto the
  subspace of energies of $H_L$ which are $\ge \tau$, and let
  $\tPi_{[0,\eps]}$ denote the projection onto the subspace of
  energies $\tH$ that are $\le \eps$. Then
  \begin{align}
  \label{eq:EnergyDist2}
    \norm{P_{[\tau,\infty)}\tPi_{[0,\eps]}} \le 
    \frac{2}{\lambda^{1/2}}
      \cdot e^{-\lambda(\Delta \tau - \Delta \teps - 33|\Lb|)} \,,
  \end{align}
  where $\Delta\tau \EqDef \tau-\Ein$ and $\Delta\teps \EqDef
  \eps-\teps_0$.
\end{theorem}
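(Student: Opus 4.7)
The natural plan is to mirror the proof of Theorem~\ref{thm:dist}. Setting $\ket{\phi}\EqDef P_{[\tau,\infty)}\tPi_{[0,\eps]}\ket{\psi}$ for an arbitrary unit vector $\ket{\psi}$, one verifies exactly as in Section~\ref{sec:dist} that because $P_{[\tau,\infty)}$ commutes with both $\tHin$ and $\Hout$, and because $\Hb,\Hout\ge 0$ while $\tHin P_{[\tau,\infty)}=\tau P_{[\tau,\infty)}$, the $\tH$-energy of $\ket{\phi}$ satisfies
\begin{equation*}
  \bra{\phi}\tH\ket{\phi}/\norm{\phi}^2 \;\ge\; \tau+\Eout \;\ge\; \teps_0 + \Delta\tau - |\Lb|,
\end{equation*}
where the second inequality comes from the easy variational estimate $\teps_0\le \Ein+\Eout+|\Lb|$ obtained by testing $\tH$ on the product of ground states of $\Hin$ and $\Hout$. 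If we had a clean Lemma~\ref{lem:normphi}-analog for $\tH$ with $R=|\Lb|$ (justified by $[P_{[\tau,\infty)},\tH]=[P_{[\tau,\infty)},\Hb]$), substituting this lower bound would immediately produce an estimate of the form \eqref{eq:EnergyDist2} with constant $3|\Lb|$ rather than $33|\Lb|$.

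The main obstacle is precisely that no such clean Lemma~\ref{lem:normphi}-analog for $\tH$ is available. That lemma rests on the Hastings-type estimate of Theorem~\ref{thm:expE}, whose Hadamard-expansion proof relies crucially on $H$ being a sum of few-body terms. Now $\tHin=\Hin P_{[0,\tau)}+\tau P_{[\tau,\infty)}$ is a single global operator on $L$, so in the expansion $e^{s\tH} A e^{-s\tH}$ with $A=P_{[\tau,\infty)}$, double commutators reduce via Jacobi to $[\tHin,[h_X,A]]=[[\tHin,h_X],A]$, whose norm can be as large as $\norm{\Hin}$---an extensive quantity that wrecks the telescoping of Lemma~\ref{lem:Hadamard-bound}.

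To circumvent this, I would pass through $H$ rather than attack $\tH$ head-on. For $\ket{\psi}\in\text{range}(\tPi_{[0,\eps]})$ I split $\ket{\psi}=\Pi_{[0,E]}\ket{\psi}+\Pi_{[E,\infty)}\ket{\psi}$ with a threshold $E$ to be optimized. The first piece is handled directly by Theorem~\ref{thm:dist}, yielding $\norm{P_{[\tau,\infty)}\Pi_{[0,E]}\ket{\psi}}\le(2/\lambda^{1/2})e^{-\lambda(\Delta\tau-(E-\Egs)-3|\Lb|)}$. For the second piece I use the decomposition $H=\tH+(\Hin-\tau)P_{[\tau,\infty)}$ together with the integral identity $(\Hin-\tau)P_{[\tau,\infty)}=\int_\tau^\infty P_{[s,\infty)}\,ds$, which gives
\begin{equation*}
  \bra{\psi}H\ket{\psi} \;=\; \bra{\psi}\tH\ket{\psi}+\int_\tau^\infty \norm{P_{[s,\infty)}\ket{\psi}}^2\,ds,
\end{equation*}
and a Markov bound then controls $\norm{\Pi_{[E,\infty)}\ket{\psi}}$. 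The hardest step will be closing this self-consistent loop so that $\norm{P_{[s,\infty)}\ket{\psi}}$ decays exponentially in $s$, since a single round of Markov gives only a polynomial tail that makes the integral above blow up. I expect this to require iterating the Theorem~\ref{thm:dist}/integral-feedback argument several times with carefully chosen thresholds at each round; each round shifts the effective energy by $O(|\Lb|)$ while sharpening the exponential tail, and the inflated constant $33|\Lb|$ (versus $3|\Lb|$ in Theorem~\ref{thm:dist}) is the price paid for accumulating these shifts over finitely many bootstrap rounds.
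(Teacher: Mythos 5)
You correctly diagnose the central obstruction: the Hadamard-expansion machinery behind Theorem~\ref{thm:expE} breaks on $\tH$ because $\tHin$ is not a sum of few-body terms, so there is no off-the-shelf Lemma~\ref{lem:normphi}-analog for $\tH$. Your preliminary variational computation (that $\bra{\phi}\tH\ket{\phi}/\norm{\phi}^2 \ge \tau+\Eout$ and $\teps_0\le\Ein+\Eout+|\Lb|$) is also right, and would indeed give the conclusion with constant $3|\Lb|$ if the missing lemma were available. But the workaround you propose does not close, and you say so yourself; moreover it has a structural circularity beyond the ``polynomial-tail-blows-up'' problem you flag. In the identity $\bra{\psi}H\ket{\psi}=\bra{\psi}\tH\ket{\psi}+\int_\tau^\infty\norm{P_{[s,\infty)}\ket{\psi}}^2\,ds$, the integrand $\norm{P_{[s,\infty)}\ket{\psi}}$ is precisely the quantity the theorem is supposed to control: $\ket{\psi}$ is a low-$\tH$-energy state and $P_{[s,\infty)}$ is a high-$\Hin$-energy projector, so feeding Theorem~\ref{thm:dist} (which requires a low-$H$-energy state) into this integral requires already knowing the answer. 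Iterating Markov cannot upgrade a polynomial tail to an exponential one without injecting new analytic input at each round, and no such input is identified.

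The paper's actual route sidesteps all of this by replacing the Hadamard expansion with the \emph{Dyson expansion} (Lemma~\ref{lem:Dyson}) applied to $e^{\lambda\tH}Ae^{-\lambda\tH}$ with the split $X\EqDef\tHin+\Hout$, $Y\EqDef\Hb$, and $A\EqDef P_{[\tau,\infty)}$. The decisive observation is that $A$ commutes with $X$ (since $A$ is a spectral projector of $\Hin$ and hence of $\tHin$, and is supported on $L$), so the inner factor $e^{\lambda X}Ae^{-\lambda X}$ collapses to $A$ and one is left with the double Dyson series $\sum_{j,j'}G_j(\lambda)\,A\,G'_{j'}(\lambda)$ in the boundary perturbation $\Hb$ alone. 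The only remaining task is to bound $\norm{\Hb(s)}=\norm{e^{s(\tHin+\Hout)}\Hb e^{-s(\tHin+\Hout)}}$ for $0\le s\le\lambda$; this is Lemma~\ref{lem:Hb-bound}, proved term-by-term in $E_{\Lb}$ by splitting with $P_\pm=P_{[\tau,\infty)},P_{[0,\tau)}$ and controlling the four blocks $\ket{\phi_{\pm\pm}}$ (the $\ket{\phi_{+-}}$ block requires an energy-slicing argument and Theorem~\ref{thm:expE}). This yields $\norm{\Hb(s)}\le 16|\Lb|$, hence $\norm{e^{\lambda\tH}Ae^{-\lambda\tH}}\le e^{32\lambda|\Lb|}$ (Lemma~\ref{lem:expE2}), and then the $\tH$-version of Lemma~\ref{lem:normphi} goes through with $2|\Lb|$ replaced by $32|\Lb|$, giving the constant $33|\Lb|$ after the final $+|\Lb|$ from the variational step. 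In short: the key idea you are missing is to avoid expanding $\tH$ directly and instead expand only in $\Hb$ around the commuting reference $\tHin+\Hout$, which converts the non-locality of $\tHin$ from an obstruction into something harmless.
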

The proof is given in the next subsection. With this result in hand,
the proof of \eqref{eq:tPi-bound} follows the identical route as
\eqref{eq:Pi-bound} above with \Thm{thm:dist2} replacing
\Thm{thm:dist}, and adjusting the boundary term from $3|\Lb|$ to
$33|\Lb|$.

For (ii), since $\tH\le H$ as operators, it follows immediately that
for every $j$, $\teps_j\le\eps_j$.\footnote{This is an immediate
consequence of Weyl's inequality for matrices. See, for example,
\Ref{ref:Franklin2012matrix}, pp.~157.} For the other inequality,
recall a useful fact about the $j^{th}$ smallest eigenvalue
$\lambda_j$ of a self-adjoint operator $A$: for any projector $P$ of
rank $j$, 
\begin{align} 
\label{e:minnorm}
  \lambda_j\le \norm{PAP} \,,
\end{align}
with equality when $P$ is chosen to be the projector onto the span
of the lowest $j$ eigenvectors of $A$.  Setting $\tP$ to be the
projector onto the span of the lowest $j$ eigenvectors of $\tH$
yields $\norm{\tP\tH \tP}=\teps_j$.  Then by the triangle inequality
\begin{align}
\label{eq:teps-j-bound}
  \teps_j= \norm{\tP\tH \tP} \ge \norm{\tP H \tP} 
    - \norm{\tP(\tH-H)\tP}
  \ge \eps_j - \norm{\tP(\tH-H)\tP} \,.
\end{align}
To upperbound $\norm{\tP(\tH-H)\tP}$, we use inequality
\eqref{eq:tPi-bound} of part (i), which implies
$\norm{\tP(\tH-H)\tP}\le\frac{6}{\lambda^{3/2}}
e^{-\lambda(\Delta\tau -\Delta\teps_j-33|\Lb|)}$. As $\teps_j
\le \eps_j\le \eps$, it follows that $\Delta\teps_j\le \Delta\teps$,
and so 
\begin{align*}
  \norm{\tP(\tH-H)\tP}\le \frac{6}{\lambda^{3/2}}
    e^{-\lambda(\Delta\tau-\Delta\teps-33|\Lb|)} \,.  
\end{align*}
Substituting this in \eqref{eq:teps-j-bound} finishes the proof.

We now move to the proof of \Thm{thm:dist2}.

\subsection{Proving \Thm{thm:dist2}}

The proof of \Thm{thm:dist2} follows closely that of
\Thm{thm:dist}. Looking at that proof, it is easy
to see that it generalizes to $\tH$, \emph{provided} we have a
version of \Thm{thm:expE} that applies to projectors of $\tH$
(instead of $H$) and an operator $A=P_{[0,\infty)}$. Given such a
theorem, all that is left to do is to adjust the prefactor in front
of the exponent, which we leave for the reader. We shall therefore
concentrate on proving the following version of \Thm{thm:expE}:
\begin{lem}
\label{lem:expE2} 
  Let $A$ be an operator that is supported by a subset of particles
  $L$, and assume that it commutes with $H_L$.  Then
  \begin{align}
  \label{eq:trucated-dist}
    \norm{\tPi_{[\eps',\infty)} A \tPi_{[0,\eps]}}
      \le \norm{A}\cdot e^{-\lambda(\eps'-\eps-32|\Lb|)} \,.
  \end{align}
\end{lem}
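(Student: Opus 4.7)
The plan is to follow the template of Theorem~\ref{thm:expE}. First, apply the Hastings sandwich trick
\begin{align*}
  \norm{\tPi_{[\eps',\infty)} A \tPi_{[0,\eps]}}
    = \norm{\tPi_{[\eps',\infty)} e^{-s\tH} e^{s\tH} A e^{-s\tH} e^{s\tH} \tPi_{[0,\eps]}}
    \le e^{-s(\eps'-\eps)} \norm{e^{s\tH} A e^{-s\tH}},
\end{align*}
valid for every $s \ge 0$, and then bound $\norm{e^{s\tH} A e^{-s\tH}}$ by a function of the form $\norm{A}\cdot (1 - cs)^{-c'|\Lb|}$. Optimizing $s$ then delivers the claimed bound; the constant $32$ is the total algebraic overhead we accumulate, analogous to how the constant $2$ appears in \Thm{thm:expE}.

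The key commutation observation is that, by hypothesis, $[A,\Hin]=0$; since $\tHin$ is a function of $\Hin$, we also have $[A,\tHin]=0$, and trivially $[A,\Hout]=0$ because $A$ is supported in $L$. Hence $[\tH,A]=[\Hb,A]$, just as $[H,A]=[\Hb,A]$ in the setting where Theorem~\ref{thm:expE} gives $R=|\Lb|$. I would then write the Hadamard series $e^{s\tH}Ae^{-s\tH}=\sum_\ell \frac{s^\ell}{\ell!}\tilde K_\ell$ with $\tilde K_\ell=[\tH,\tilde K_{\ell-1}]$, and expand each $\tH$ as $\Hb + \Hout + \tHin$. The $[\Hb,\cdot]$ and $[\Hout,\cdot]$ pieces can be handled exactly as in \Lem{lem:Hadamard-bound} by the locality-counting argument, since $\Hb$ and $\Hout$ are sums of local terms with the bounded-overlap property~\eqref{def:g}.

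The main obstacle is bounding the nested commutators involving $\tHin$, which is not a sum of few-body terms. To deal with this, I would write $\tHin = \Hin - D$ with $D \EqDef (\Hin - \tau) P_{[\tau,\infty)}$. Because $D$ is a function of $\Hin$, we get $[D,A]=[D,\Hin]=[D,\Hout]=0$, so inside any nested commutator the only nontrivial interaction of $D$ is with $\Hb$. Using Jacobi repeatedly (e.g.\ $[D,[\Hb,A]] = [[D,\Hb],A]$), every occurrence of $D$ can be migrated so that it appears only inside a factor of the form $[D,\Hb]$ or inside iterated commutators $[\Hb,[D,\Hb]]$, etc. The $[\Hin,\cdot]$ contribution from $\tHin = \Hin - D$ is then handled exactly as in \Thm{thm:expE}. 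This reduces the whole problem to a bound on $\norm{[D,\Hb]}$ (and its iterates) of the form $O(|\Lb|)$.

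The final ingredient is therefore a bound $\norm{[D,\Hb]}\le C|\Lb|$. I would derive this from the spectral integral representation
\begin{align*}
  D = \int_\tau^\infty P_{[t,\infty)}(\Hin)\,dt,
\end{align*}
which gives $\norm{[D,\Hb]}\le \int_\tau^\infty \norm{[P_{[t,\infty)}(\Hin),\Hb]}\,dt$, and the individual commutators $\norm{[P_{[t,\infty)}(\Hin),\Hb]}$ can be estimated by applying \Thm{thm:expE} with the roles of $H$ and $A$ played by $\Hin$ and $\Hb$ respectively (noting that the set of $\Hin$-terms not commuting with $\Hb$ has total norm $O(|\Lb|)$, so $R = O(|\Lb|)$ in that application). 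The exponential tail in $t$ makes the integral converge to an $O(|\Lb|)$ quantity with an absolute prefactor, which propagates through the Hadamard recursion $\norm{\tilde K_{\ell+1}} \le (c_1|\Lb| + \ell \cdot c_2 gk)\norm{\tilde K_\ell}$. Summing the series yields $\norm{e^{s\tH}Ae^{-s\tH}}\le \norm{A}\cdot(1 - c_2 gk s)^{-c_1|\Lb|/(c_2 gk)}$, and a calculus-style optimization in $s$ (as at the end of \Sec{sec:expE}) gives exponential decay at rate $\lambda = 1/(2gk)$ with the enlarged $|\Lb|$-coefficient $32$.
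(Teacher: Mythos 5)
Your strategy tries to rescue the Hadamard--series approach that the paper explicitly abandons: the paper remarks that since $\tH$ contains non-local terms the argument of \Lem{lem:Hadamard-bound} no longer applies, and instead invokes the Dyson expansion (\Lem{lem:Dyson}) with $X=\tHin+\Hout$, $Y=\Hb$, reducing the whole problem to the uniform bound $\norm{e^{s(\tHin+\Hout)}\Hb\, e^{-s(\tHin+\Hout)}}\le 16|\Lb|$ of \Lem{lem:Hb-bound}, which is proved by a $P_\pm$ block decomposition with an energy-band slicing for the off-diagonal $P_+(\cdot)P_-$ block. Your route (write $\tHin=\Hin-D$ with $D=(\Hin-\tau)P_{[\tau,\infty)}$, migrate $D$ by Jacobi) is genuinely different, but it has a concrete gap at the step that claims $\int_\tau^\infty\norm{[P_{[t,\infty)}(\Hin),\Hb]}\,dt=O(|\Lb|)$. \Thm{thm:expE} gives exponential decay of $\norm{P_{[\eps',\infty)}\Hb P_{[0,\eps]}}$ as a function of the spectral \emph{gap} $\eps'-\eps$, not of the cut location $t$; but $\norm{[P_{[t,\infty)},\Hb]}=\norm{P_{[t,\infty)}\Hb P_{[0,t)}}$ involves two intervals meeting at $t$, so there is no gap, \Thm{thm:expE} yields only the trivial $O(\norm{\Hb})$ bound with no $t$-decay, and the integral effectively ranges over $[\tau,\norm{\Hin}]$, giving $O(|\Lb|\cdot|\Lin|)$ rather than $O(|\Lb|)$. (The conclusion $\norm{[D,\Hb]}\le O(|\Lb|)$ is in fact true, but a correct derivation requires splitting one side into energy bands of width $\orderof{gk}$ and using decay inside the bands --- precisely what the paper does in the $\ket{\phi_{+-}}$ case of \Lem{lem:Hb-bound} --- not the blunt spectral integral.)

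Even granting a correct bound on $\norm{[D,\Hb]}$, the recursion $\norm{\tilde K_{\ell+1}}\le(c_1|\Lb|+\ell\, c_2\, gk)\norm{\tilde K_\ell}$ is asserted rather than derived, and it is not clear it holds. The proof of \Lem{lem:Hadamard-bound} hinges on the fact that each commutation with a local $h_X$ touches only $O(gk)$ new local terms. Once a non-local object such as $[D,\Hb]$ (or an iterated $D$-commutator of it) appears inside a nested commutator, taking a further commutator with $\Hin$ no longer respects this locality cone; one has to argue, term by term, that identities like $[\Hin,[D,\Hb]]=[D,[\Hin,\Hb]]$ keep pushing the $D$'s onto boundary-supported objects whose norms stay $O(|\Lb|)$, and your outline does not carry this bookkeeping out. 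The paper's use of the Dyson expansion sidesteps the whole issue by never differentiating: it keeps $e^{s(\tHin+\Hout)}\Hb\, e^{-s(\tHin+\Hout)}$ intact and bounds its operator norm directly via the projectors $P_\pm$, which is why the final constant $32=2\cdot 16$ is obtained cleanly.
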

\begin{proof}[\ of \Lem{lem:expE2}]
  As in the proof of \Thm{thm:expE}, assume without loss of
  generality that $\norm{A}=1$, and insert
  $e^{-\lambda \tH}e^{\lambda \tH}$ before and after $A$ in the LHS
  of \eqref{eq:trucated-dist}. We get,
  \begin{align}
    \norm{\tPi_{[\eps',\infty]}A \tPi_{(0,\eps]}} 
     \le e^{-\lambda(\eps'-\eps)} \cdot
       \norm{e^{\lambda \tH} A e^{-\lambda \tH}} \,.
  \end{align}
  Our goal is then to show that $\norm{e^{\lambda \tH} A e^{-\lambda
  \tH}}\le e^{32\lambda |\Lb|}$. However,
  since $\tH$ contains non-local terms, we can no longer prove this
  using the Hadamard formula, as we did in the proof of \Thm{thm:expE}.
  As an alternative approach, we use the Dyson expansion:
  \begin{lem}[ Dyson expansion]
  \label{lem:Dyson}
    For any two operators $X,Y$ and a real number $t\ge 0$,
    \begin{align}
      \label{eq:Dyson}
      e^{t(X+Y)} &= \sum_{{j}=0}^\infty G_j(t)e^{tX} ,\quad \text{and}
      \quad
      e^{-t(X+Y)} = e^{-tX}\sum_{{j}=0}^\infty G'_j(t)
    \end{align}
    where, 
    \begin{align}      
      G_j(t)&\EqDef \int_0^t \!\! ds_1\int_0^{s_1}\!\!ds_2
        \cdots \int_0^{s_{j-1}} \!\!\!\! ds_j \, 
        Y(s_j)\cdots  Y(s_2)\cdot Y(s_1) \,,\\  
      G'_j(t)&\EqDef (-1)^{j}\int_0^t \!\! ds_1\int_0^{s_1}\!\!ds_2
        \cdots \int_0^{s_{j-1}} \!\!\!\! ds_j \, Y(s_1)\cdot
        Y(s_2)\cdots Y(s_j) \,,\\
        Y(s) &\EqDef e^{s X} Y e^{-s X} \,,
    \end{align}
    and $G_0(t) = G'_0(t) = \Id$.
  \end{lem}
  The proof is given in the appendix. 
  
  Recalling that $\tH = \tHin + \Hb + \Hout$, we let $X\EqDef\tHin +
  \Hout$ and $Y\EqDef\Hb$. Then
  \begin{align*}
    e^{\lambda \tH} A e^{-\lambda\tH}
     &= \sum_{j=0}^\infty \sum_{{j}'=0}^\infty 
       G_j(\lambda) e^{\lambda(\tHin + \Hout)}
       A e^{-\lambda(\tHin + \Hout)}G'_{{j}'}(\lambda) \\
     &=  \sum_{{j}=0}^\infty \sum_{{j}'=0}^\infty
       G_j(\lambda) A G'_{{j}'}(\lambda) \,,
  \end{align*}
  where in the last equality we used the fact that $A$ commutes with
  $\Hin$ and is supported on $L$, and so it also commutes with 
  $\tHin + \Hout$. By the triangle inequality, it follows that
  \begin{align}
  \label{eq:sum-norm}
    \norm{e^{\lambda \tH} A e^{-\lambda\tH}}
      \le \sum_{{j}=0}^\infty \sum_{{j}'=0}^\infty
         \norm{G_j(\lambda)}\cdot\norm{G'_{{j}'}(\lambda)} 
    = \left(\sum_{{j}=0}^\infty \norm{G_j(\lambda)}\right)
         \cdot\left(\sum_{{j}'=0}^\infty
       \cdot\norm{G'_{{j}'}(\lambda)}\right) \,.
  \end{align}
  
  Our task is then to bound $\norm{G_j(\lambda)}$ and
  $\norm{G'_{{j}'}(\lambda)}$. Using the definition of the Dyson
  expansion in \Lem{lem:Dyson}, we have 
  \begin{align*}      
    G_j(\lambda)&\EqDef \int_0^\lambda \!\! ds_1\int_0^{s_1}\!\!ds_2
      \cdots \int_0^{s_{{j}-1}} \!\!\!\! ds_j \,
      \Hb(s_j)\cdots  \Hb(s_2)\cdot \Hb(s_1) \,,\\
    G'_j(\lambda)&\EqDef (-1)^{j}\int_0^\lambda 
      \!\! ds_1\int_0^{s_1}\!\!ds_2
      \cdots \int_0^{s_{{j}-1}} \!\!\!\! ds_j \, 
      \Hb(s_1)\cdot \Hb(s_2)\cdots \Hb(s_j) \,,
  \end{align*}
  where $\Hb(s)\EqDef e^{s (\tHin + \Hout)} \Hb e^{-s (\tHin
  + \Hout)}$. To proceed, we need the following lemma, which is
  proved by the end of this section.
  \begin{lem}
  \label{lem:Hb-bound}
    $\norm{\Hb(s)} \le 16|\Lb|$ for all $0\le
    s\le \lambda$.
  \end{lem}
  Defining $c\EqDef 16|\Lb|$, we can use the lemma to bound
  $\norm{G_j(\lambda)}$:
  \begin{align*}
    \norm{G_j(\lambda)} &\le \int_0^\lambda \!\! 
      ds_1\int_0^{s_1}\!\!ds_2
      \cdots \int_0^{s_{{j}-1}} \!\!\!\! ds_j \,
      \norm{\Hb(s_j)}\cdots  \norm{\Hb(s_1)} \,,\\
    &= \frac{1}{{j}!} \left(\int_0^\lambda \norm{\Hb(s)}\right)^{j}
    \le \frac{1}{{j}!} (\lambda c)^{j} \,.
  \end{align*}
  Similarly, $\norm{G'_{{j}'}(\lambda)}\le \frac{1}{{j}!} (\lambda
  c)^n$. Therefore, $\sum_{{j}=0}^\infty \norm{G_j(\lambda)}\le
  e^{\lambda c}$ and $\sum_{{j}'=0}^\infty\norm{G'_{{j}'}(\lambda)} \le
  e^{\lambda c}$, which, upon substitution in \eqref{eq:sum-norm},
  proves that
  \begin{align*}
    \norm{e^{\lambda \tH} A e^{-\lambda\tH}}
      \le e^{2\lambda c} = e^{32\lambda |\Lb|} \,.
  \end{align*}
  
  We finish the proof by proving \Lem{lem:Hb-bound}.
  \begin{proof}[\ of \Lem{lem:Hb-bound}]
    We will show that for every $X\in E_{\Lb}$,
    $\norm{e^{s(\tHin+\Hout)} h_X e^{-s(\tHin+\Hout)}} \le
    16\norm{h_X}$, from which it follows that
    \begin{align*}
      \norm{e^{s(\tHin+\Hout)} \Hb e^{-s(\tHin+\Hout)}}
        \le 16\sum_{X\in E_{\Lb}} \norm{h_X} = 16|\Lb| \,.
    \end{align*}
    
    Since $[\tHin,\Hout]=0$, we can write 
    \begin{align*}
      e^{s(\tHin+\Hout)} h_X e^{-s(\tHin+\Hout)}
        = e^{s\tHin} O e^{-s\tHin} \,,
    \end{align*}
    where $O\EqDef e^{s\Hout} h_X e^{-s\Hout}$. We first apply
    \Lem{lem:Hadamard-bound} to bound $\norm{O}$, by noting that for
    $A=h_X$ we can use $R=gk$, and consequently, $\norm{e^{s\Hout}
    h_X e^{-s\Hout}} \le (1-sgk)^{-1}\cdot\norm{h_X}$. Since $s\le
    \lambda=1/(2gk)$, it follows that $\norm{O}\le 2\norm{h_X}$.
    
    Next, we wish to bound $\norm{e^{s\tHin}Oe^{-s\tHin}}$. To
    this aim, let us bound the norm of 
    $\ket{\phi}\EqDef{e^{s\tHin}O e^{-s\tHin}\ket{\psi}}$,
    where $\ket{\psi}$ is an arbitrary normalized state. For
    brevity, define $P_+ \EqDef P_{[\tau, \infty)}$, and $P_-\EqDef
    P_{[0,\tau)}$, where $P_{[\tau, \infty)}$ and $P_{[0,\tau)}$ are
    the projectors used in the definition of $\tH$ in
    \Def{def:Ht}. Then writing $\ket{\phi_{\pm\pm}}\EqDef P_\pm e^{s\tHin}O
    e^{-s\tHin}P_\pm\ket{\psi}$, we have
    \begin{align*}
      \ket{\phi} = \ket{\phi_{++}} +\ket{\phi_{+-}}
        + \ket{\phi_{-+}} +\ket{\phi_{--}} \,.
    \end{align*}
    We now bound the norm of each
    component separately using the fact that $P_+ e^{\pm s\tH} =
    P_+e^{\pm s\tau}$, and $P_- e^{\pm s\tHin} = P_-e^{\pm s
    \Hin}$.
    \begin{description}
      \item [$\ket{\phi_{++}}$:]
        By definition, $\ket{\phi_{++}} = P_+ e^{s \tau}O
        e^{-s \tau}P_+\ket{\psi} = P_+ OP_+\ket{\psi}$ and so
        $\norm{\phi_{++}} \le \norm{O}\cdot
        \norm{P_+\ket{\psi}}\le 2 \norm{h_X}\cdot\norm{P_+\ket{\psi}}$.

      \item [$\ket{\phi_{-+}}$:] 
        Here, $\ket{\phi_{-+}} = P_- e^{s\Hin}O e^{-s\tau}
        P_+\ket{\psi}$ and so $\norm{\phi_{-+}} \le e^{-s\tau}
        \cdot\norm{P_- e^{s\Hin}}\cdot
        \norm{O}\cdot\norm{P_+\ket{\psi}}$. But as $\norm{P_- e^{s
        \Hin}}\le e^{s\tau}$, we conclude that $\norm{\phi_{-+}}\le
        \norm{O}\cdot \norm{P_+\ket{\psi}}\le
        2\norm{h_X}\cdot\norm{P_+\ket{\psi}}$.
        
      \item [$\ket{\phi_{--}}$:] 
        Here $\ket{\phi_{--}} = P_-e^{s \Hin}O e^{-s
        \Hin}P_-\ket{\psi}$ so $\norm{\phi_{--}} \le
        \norm{e^{s\Hin}O e^{-s \Hin}}\cdot\norm{P_-\ket{\psi}}
        \le 2\norm{h_X}\cdot\norm{P_-\ket{\psi}}$, where we invoked
        \Lem{lem:Hadamard-bound} to deduce that $\norm{e^{s\Hin}O
        e^{-s \Hin}}\le 2\norm{h_X}$.\footnote{\label{fn:R} Note
        that just like when $A=h_X$, when $A=O=e^{s\Hout} h_X
        e^{-s\Hout}$, we can still use $R=gk$, because the
        operators $e^{\pm s \Hout}$ commute with the local terms
        of $\Hin$.}
      
      \item [$\ket{\phi_{+-}}$:] 
        We write $\ket{\phi_{+-}} = e^{s t}P_+ O e^{-s
        \Hin}P_-\ket{\psi}$. To bound its norm, we slice the energy
        range of $P_-$, i.e., $[0,\tau)$ into segments $I_j=[a_j,b_j)$
        of width $h\EqDef gk$, such that $I_0=[\tau-h,\tau),
        I_1=[\tau-2h,\tau-h)], \ldots$ (the last segment might be of
        shorter width). Then
        \begin{align*}
          \norm{\phi_{+-}} &\le e^{s t}\sum_{j\ge 0}
            \norm{P_+ O P_{I_j}}
            \cdot \norm{e^{-s \Hin}P_{I_j}\ket{\psi}} \,.
        \end{align*}
        As $\norm{P_+ O P_{I_j}} \le
        \norm{P_{[t,\infty)}OP_{[0,t-jh)}}$, we can use
        \Thm{thm:expE} with $R=gk$ (see footnote~\ref{fn:R}) to 
        obtain
        \begin{align*}
          \norm{P_+ O P_{I_j}} 
            \le \exp\left\{-\frac{1}{gk}
              \big[jh-gk(1+\ln(jh/gk))\big]\right\}
                \cdot\norm{O} 
            = j e^{-j+1}\cdot\norm{O}\,.
        \end{align*}
        In addition, $e^{s\tau}\norm{e^{-s\Hin}P_{I_j}\ket{\psi}} 
        \le e^{s\tau}e^{-s(\tau-jh-h)}\norm{P_{I_j}\ket{\psi}} 
        =e^{sjh+sh}\norm{P_{I_j}\ket{\psi}}$, and therefore, since
        $s\le \lambda$, we get $e^{s\tau}\norm{P_+ O P_{I_j}} \cdot
        \norm{e^{-s \Hin}P_{I_j}\ket{\psi}} \le \norm{O}\cdot
        e^{3/2} j e^{-j/2}\cdot\norm{P_{I_j}\ket{\psi}}$. Summing up
        the $j=0,1,2,\ldots$ contributions gives us
        \begin{align*}
          \norm{\phi_{+-}} &\le 
            \norm{O}\cdot e^{3/2}
              \sum_{j\ge 0} je^{-j/2}
                \cdot\norm{P_{I_j}\ket{\psi}} 
          \le \norm{O}\cdot e^{3/2}
              \left(\sum_{j\ge 0} j^2e^{-j}\right)^{1/2}
               \norm{P_-\ket{\psi}}\,,
        \end{align*}
        where the second inequality follows from the Cauchy-Schwartz
        inequality, together with the fact that $\sum_{j\ge 0}
        \norm{P_{I_j}\ket{\psi}}^2 = \norm{P_-\ket{\psi}}^2$.
        
        We now use $\norm{O}\le 2\norm{h_X}$, together with the 
        formula $\sum_{j\ge 0}j^2q^j=\frac{q(1+q)}{(1-q)^3}$ to get
        \begin{align*}
          \norm{\phi_{+-}} &\le
            \norm{h_X}\cdot 2e^{3/2}
            \cdot \left(\frac{e^{-1}(1+e^{-1})}{(1-e^{-1})^3}\right)^{1/2}
              \norm{P_-\ket{\psi}}
           \le 13\norm{h_X}\cdot\norm{P_-\ket{\psi}}\,.
        \end{align*}
    \end{description}
    All together, we find that $\norm{\phi} \le
    (4\norm{P_+\ket{\psi}} + 15\norm{P_-\ket{\psi}})
    \cdot\norm{h_X}$, so by invoking the Cauchy-Schwartz inequality
    once more, we get $\norm{\phi}\le \sqrt{4^2+15^2}\norm{h_X}\le
    16\norm{h_X}$.

  \end{proof}

\end{proof}

\section{Summary and future work}
\label{sec:summary}

In this paper we have rigorously proven several bounds on the local
and global energy distributions in quantum spin models on a lattice.
The common theme in all these results is that, to a large extent,
these energy distributions behave as if the underlying system is
commuting (or even classical), up to some exponentially small
corrections. Our bounds apply to a very wide family of systems: all
that is assumed is that the interactions are at most $k$-body and that the
total strength of the interactions that involve a particle is
finite. No other assumptions like nearest-neighbors interactions, 
spectral gap, shape of the spectrum, or the specific form of the
interactions is needed. Indeed, the most important ingredient that
was used is the fact that the system is made of many local
interactions, and that the influence of a single particle on the
total energy of the system is bounded by a constant. It is this
explicit locality that tames the quantum effects of
non-commutativity, and drives the system towards a more classical
behavior. 

The main motivation behind this paper was the need to construct a
good approximation for the ground state projector of a gapped system
(AGSP) using a low-degree polynomial of $H$. This was a central
building block of a recent 1D area-law proof \cite{ref:Arad-AL13}.
Nevertheless, since the results we have presented here are very
general, we hope that they might be useful at other places as well.
One example where our results have already been used is in a recent
result about the entanglement structure of gapped ground states,
known as ``Local reversibility''~\cite{Ref:LR}, which was published
after the first draft of this paper.

Finally, it is interesting to know how tight our bounds are.  This
can be studied by either optimizing our calculations, or by directly
estimating the energy distributions of particular examples, either
numerically or analytically, to see how they match our bounds. In
particular, some very simple numerical calculations, which we
performed on a chain of 12 spins with random interactions, suggest
that the energy distribution $\norm{\Pi_{[\eps', \infty)}
A\Pi_{[0,\eps]}}$ from \Thm{thm:expE} can be upperbounded by an
expression of the form $e^{-\orderof{ |\eps'-\eps-\orderof{R}| }
\log |\eps'-\eps-\orderof{R}| } \cdot \norm{A}$. It would be
interesting to see if such a stronger bound can also be proven
rigorously.  

\section*{Acknowledgments}

We are grateful to M.~B.~Hastings for sharing his unpublished proof
of the result upon which \Thm{thm:expE} is based. We thank
J.~I.~Latorre for useful discussions and comments on the manuscript.

Research at the Centre for Quantum Technologies is funded by the
Singapore Ministry of Education and the National Research
Foundation, also through the Tier 3 Grant ``Random numbers from
quantum processes''.
TK also acknowledges the support from the Program for Leading Graduate
Schools, MEXT, Japan and JSPS grant no. 2611111.


\bibliographystyle{ieeetr}

{~}

\bibliography{edist}

\appendix

\section{Proof of \Lem{lem:Dyson}}

We will only prove the first equality in \Eq{eq:Dyson}, i.e., 
$e^{t(X+Y)} = \sum_{{j}=0}^\infty G_j(t)e^{tX}$, as the proof of
second equality follows the exact same lines.
    
Define $L(t)\EqDef e^{t(X+Y)}$ and $R(t)\EqDef \sum_{{j}=0}^\infty
G_j(t)e^{tX}$, the LHS and RHS of the first equation in
\eqref{eq:Dyson} respectively. We wish to show that $L(t)= R(t)$ for
all $t\ge 0$.  We do that by showing that as a function of $t$, 
both satisfy the same linear ordinary differential equation with the
same initial condition. Indeed, at $t=0$, we have $L(0)=R(0)=\Id$.
Next, differentiating $L(t)$ gives us the equation $\frac{d}{dt}L(t)
= L(t)\cdot(X+Y)$. Let us show that the same holds for $R(t)$. By
definition, 
\begin{align*}
  \frac{d}{dt} R(t) = R(t)X 
    + \sum_{{j}=0}^\infty\frac{d}{dt}G_j(t)e^{tX} \,.
\end{align*}
But clearly $\frac{d}{dt}G_j(t) = G_{{j}-1}(t)Y(t)$ for ${j}>0$ and
is vanishing for ${j}=0$, and so
\begin{align*}
  \frac{d}{dt} R(t) = R(t)X
    + \sum_{{j}=0}^\infty G_j(t) Y(t) e^{tX}
    = R(t)X + \sum_{{j}=0}^\infty G_j(t)e^{tX} Y
    = R(t)\cdot (X+Y) \,,
\end{align*}
which concludes the proof. \qedsymb

\end{document}